%% file: main.tex
\documentclass[12pt]{article}
\usepackage{amsmath}
\usepackage{amsbsy}
\usepackage{graphicx}
\usepackage{url} 
\usepackage[colorlinks=true,urlcolor=blue,citecolor=blue,linkcolor=blue,bookmarks=true]{hyperref}
\usepackage{graphicx}
\usepackage{color}
\usepackage{setspace} 
\usepackage{mathtools}
\usepackage{amssymb}
\usepackage{caption}
\usepackage{multirow}
\usepackage{natbib}
\usepackage{algorithm}
\usepackage{algpseudocode}
\usepackage{amsthm}
\usepackage{eqnarray}
\usepackage{mathrsfs}
\usepackage{bm}
\usepackage{bibunits}

\usepackage[utf8]{inputenc}
\DeclareUnicodeCharacter{2212}{-}

\newcommand{\blind}{1}

\newtheorem{theorem}{Theorem}[section]
\newtheorem{proposition}[theorem]{Proposition}

\theoremstyle{definition}

\theoremstyle{remark}

\input{math_commands}

\begin{document}
\begin{bibunit}[mychicago]
\def\spacingset#1{\renewcommand{\baselinestretch}%
{#1}\small\normalsize}  

\spacingset{1}

\if1\blind
{
  \title{\bf Multi-Teacher Knowledge Distillation via Teacher-Informed Mixture Priors}
  \author{
  Luyang Fang$^{1}$,
  Yongkai Chen$^{2}$,
  Jiazhang Cai$^{1}$,
  Ping Ma$^{1,*}$,
  Wenxuan Zhong$^{1,*}$\\[0.5em]
  $^{1}$Department of Statistics, University of Georgia\\
  $^{2}$Department of Statistics, Harvard University\\
  $^{*}$Corresponding authors
}
  \date{}
  \maketitle
} \fi


\if0\blind
{
  \bigskip
  \bigskip
  \bigskip
  \begin{center}
     {\LARGE\bf Multi-Teacher Bayesian Knowledge Distillation }
\end{center}
  \medskip
} \fi

\bigskip
\begin{abstract}

Knowledge distillation is a powerful method for model compression, enabling the efficient deployment of complex deep learning models (teachers), including large language models. However, its underlying statistical mechanisms remain unclear, and uncertainty evaluation is often overlooked, especially in real-world scenarios requiring diverse teacher expertise. To address these challenges, we introduce \textit{Multi-Teacher Bayesian Knowledge Distillation} (MT-BKD), where a distilled student model learns from multiple teachers within the Bayesian framework.
Our approach leverages Bayesian inference to capture inherent uncertainty in the distillation process. We introduce a teacher-informed prior, integrating external knowledge from teacher models and task-specific training data, offering better generalization, robustness, and scalability. Additionally, an entropy-based weighting mechanism adaptively adjusts each teacher's influence, allowing the student to combine multiple sources of expertise effectively. MT-BKD enhances the interpretability of the student model's learning process, improves predictive accuracy, and provides uncertainty quantification. We validate MT-BKD on both synthetic and real-world tasks, including protein subcellular location prediction and image classification. Our experiments show improved performance and robust uncertainty quantification, highlighting the strengths of our MT-BKD framework.


\end{abstract}

\noindent%
{\it Keywords:} Uncertainty Quantification; Bayesian Priors; Large Language Models; Protein Subcellular Prediction; Image Classification
\vfill

\newpage
\spacingset{1.9} 


\section{Introduction}\label{sec:intro}

The rapid advancements in deep learning have led to the development of increasingly complex models with billions of parameters, achieving state-of-the-art performance across numerous tasks \citep{kondratyuk2023videopoet, dosovitskiy2020image}. However, deploying these large models in real-world applications is often infeasible due to significant computational and storage demands. Knowledge distillation (KD) offers a powerful solution by transferring knowledge from a large, pre-trained \textit{teacher} model to a smaller, more efficient \textit{student} model \citep{zhao2022decoupled,hinton2015distilling,latif2023knowledge}. By training the student to replicate the teacher’s predictions, KD produces lightweight models with competitive performance, suitable for resource-constrained environments \citep{gou2021knowledge}. This approach is particularly prevalent in the development of large language models, where vendors routinely release both full-scale and lightweight versions, with the latter often trained using KD \citep{touvron2024llama3}.

Many real-world scenarios involve learning from multiple teacher models with diverse expertise. For instance, a model trained on structured data may provide insights different from one specialized in unstructured data. Multi-teacher KD extends KD by leveraging this diversity to improve student learning. Recent research addresses this need with data-dependent weighting schemes that balance each teacher’s influence, enabling the student to capture multiple teachers’ strengths \citep{you2017learning, zhao2024mitigating}.

Despite the empirical success of Multi-teacher KD, significant challenges remain. 
First, the statistical insights of KD and their effects on student-model improvement remain unclear. Although some studies \citep{phuong2019towards, menon2021statistical} have explored its impact, they primarily evaluate improvements in predictive performance. A deeper investigation into the statistical and theoretical aspects is crucial for understanding its full potential.
Second, most existing methods prioritize improving prediction accuracy but lack quantification of uncertainty in its prediction.
This limitation reduces their ability to assess the reliability of the student model’s predictions, restricting their applicability in high-stakes domains where interpretability and robustness are crucial.
Moreover, current approaches often treat teachers independently, combining their outputs through weighted averages without considering potential interactions among the teachers. This simplification can lead to suboptimal knowledge transfer, preventing the student from fully leveraging the combined expertise of multiple teachers.

To address these challenges, we propose a Bayesian framework for multi-teacher knowledge distillation, introducing a teacher-informed mixture prior that integrates guidance from multiple teachers into one probabilistically coherent model. By integrating both teacher knowledge and task-specific data, this prior captures diverse external expertise while remaining closely aligned with the task. This integration of knowledge offers better generalization and robustness than purely data-driven priors used in classical Bayesian analysis, especially in the case of limited training data.
The mixture prior also reflects teacher interactions, allowing the data to favor the teachers that best explain the observations naturally. 
Additionally, we employ an entropy-based weighting scheme that dynamically adjusts each teacher’s influence on a per-sample basis, ensuring that the most knowledgeable teacher has the greatest impact.
Beyond improving the distillation process, our Bayesian formulation naturally equips the student model with uncertainty quantification capabilities. By sampling from the student’s posterior via stochastic gradient Langevin dynamics \citep{welling2011bayesian}, we derive robust uncertainty measures and estimate credible intervals. These uncertainty estimates improve interpretability by quantifying the model's confidence in its predictions, which is crucial for reliable decision-making in critical applications.

\noindent\textbf{Our contributions are summarized as follows:}
\begin{itemize}
    \vspace{-0.3cm}
     \item  Uncertainty Quantification: We develop Bayesian inference tools for the student model to provide uncertainty estimates that improve interpretability and enable robust decision-making.
    \vspace{-0.3cm}
    \item  Teacher-informed Prior:  We introduce a Bayesian KD framework that integrates the external knowledge of teacher models and training data to provide the prior,  offering better generalization, robustness, and scalability.     
    \vspace{-0.3cm}
    \item  Unified Multi-teacher Framework: We leverage a mixture distribution to combine the knowledge of multiple teachers, which captures their interactions, and ensures effective knowledge transfer. Our entropy-driven, sample-specific weighting mechanism prioritizes important teachers for each data point, optimizing the student’s learning process.
    \vspace{-0.3cm}
    \item  Validation: We demonstrate the effectiveness of our approach using synthetic and real-world datasets, highlighting its ability to enhance both predictive performance and uncertainty quantification.
    \vspace{-0.3cm}
\end{itemize}

\section{Related Research}\label{sec:related}

Knowledge distillation (KD) has been widely studied for its effectiveness and limitations. The original work by \citet{hinton2015distilling} transfers knowledge by minimizing the KL divergence between teacher and student outputs. Later studies explored strategies like mimicking intermediate-layer representations \citep{zagoruyko2016paying,phuong2019towards,shen2022self}.
To leverage diverse teacher expertise, researchers have proposed multi-teacher KD strategies. 
\citep{you2017learning, zhang2022confidence, liu2020adaptive, chen2022knowledge, gui2023mt4mtl, zhao2024mitigating}. 
For instance, \citet{you2017learning} align intermediate-layer representations across teachers using relative dissimilarity constraints and aggregate their guidance through a voting mechanism. 
\citet{zhao2024mitigating} combine multi-teacher KD with adversarial training by employing specialized teachers for clean and adversarial data, enhancing both accuracy and robustness.
However, existing multi-teacher KD approaches predominantly focus on improving predictive accuracy and often overlook interactions between teachers or the uncertainty in the student model’s predictions.



Efforts to incorporate Bayesian principles into KD have led to methods such as \textit{Bayesian Dark Knowledge} \citep{korattikara2015bayesian}, which distills the teacher model’s posterior predictive distribution into a compact student model, enabling uncertainty quantification. Subsequent extensions \citep{wang2018adversarial, malinin2019ensemble, vadera2020generalized} refine this framework, but these methods require access to the teacher model's posterior predictive distribution, which is often impractical for large-scale pre-trained models.

Formulating data-based priors has a long history in Bayesian statistics \citep{robbins1992empirical}. Some studies incorporate information from previous work to construct priors \citep{ibrahim2015power}. In particular, the power prior is derived by raising the likelihood of historical data to a specified power. However, its properness requires the covariate matrix of either historical or current data to have full column rank \citep{chen2000power}.
Another approach is to use synthetic data to define the prior distribution by training a simpler model and generating samples from its predictive distribution \citep{huang2020catalytic,bernardo1979reference}. In cases with limited expert knowledge, this method is effective. However, with powerful teacher models, MT-BKD can integrate both expert insights and current data to create the prior.

\section{Preliminaries}\label{sec:Preliminaries}
We begin by introducing the preliminaries of deep neural network models, followed by an overview of knowledge distillation and its extension to multi-teacher knowledge distillation. 

\noindent\textbf{\textit{Deep Neural Network Model.}}
For the convenience of presentation, we introduce our method in the context of classification problems. 
In the classification problems, we are given the training sample $\mathcal{D}=\{(\bx_i, \bfy_i)\}_{i=1}^N $, where $\bx_i \in \Omega_\bx \subset \RR^m$ is the predictor and $\bfy_i=(y_{i1},\ldots,y_{iK})^T$ is the corresponding label represented in the form of indicator vector. Specifically, if the $i^{\text {th }}$ data point belongs to class $k$, we have $y_{i k}=1$ and $y_{i j}=0$ for all $j \neq k$, where $i=1, \ldots, N, j=1, \ldots, K$. 
Each data point is independently collected across $i \in \{1,\ldots,N\}$. 
Our goal is to find a function $\bfh: \mathbb{R}^m \to [0,1]^K$ from a function class 
that approximates the conditional probability $\mathbb{P}(y_{ik} = 1 \mid \bx_i)$ for each $k \in \{1, \ldots, K\}$. Specifically, we seek a function $\bfh(\bx_i; \bstheta) = (h_1(\bx_i; \bstheta), \ldots, h_K(\bx_i; \bstheta))^T$, where $h_k(\bx_i; \bstheta)$ corresponds to the approximation of $\mathbb{P}(y_{ik} = 1 \mid \bx_i)$.  The function class can be very general, ranging from logistic regressions \citep{Faraway2016extending}, nonlinear regressions \citep{Bates1988nonlinear},  to deep neural networks \citep{Goodfellow-et-al-2016}.
In this article, we focus on finding $\bfh$ in the function class of deep neural networks, including feed-forward networks, convolutional neural nets, transformers, etc, and please see \cite{Zhang2021dive} for more details.
For example, a feed-forward network
utilizes functions from the following compositional function class:
\begin{equation}\label{dnn_def}
\bfh(\bx ; \bstheta)=\boldsymbol{\sigma}_{L}\left(\mathbf{W}_L \boldsymbol{\sigma}_{L-1}\left(\mathbf{W}_{L-1} \cdots \boldsymbol{\sigma}_2\left(\mathbf{W}_2 \boldsymbol{\sigma}_1\left(\mathbf{W}_1 \bx\right)\right)\right)\right),
\end{equation}
where $\bstheta=\left\{\mathbf{W}_1, \ldots, \mathbf{W}_L\right\}$ are the parameters of the model,  and $\boldsymbol{\sigma}_l$, $l=1, \ldots, L$ are some fixed nonlinear activation functions. 
A popular choice of $\boldsymbol{\sigma}_l(\cdot)$ for $l=1, \ldots, L-1$  is the ReLU (Rectified Linear Unit) function, which sets negative values to zero while keeping positive values unchanged, applied element-wise in multivariate settings.
In classification problems, for the final layer $L$, researchers often use a soft-max function \citep{fan2020selective}. 



As $\bfh(\cdot; \cdot)$ is determined by the neural network structure, the unknown parameter is $\bstheta$. To estimate $\bstheta$,
we minimize the empirical risk:
\begin{equation}\label{eq:loss}
    \mathcal{L}(\bstheta \mid  \mathcal{D}) = \frac{1}{N}\sum_{i=1}^N \CE(\bfy_i, \bfh(\bx_i;\bstheta)),
\end{equation}
where 
$\CE(\bfy_i, \bfh(\bx_i;\bstheta)) = - \sum_{k=1}^K y_{ik} \log\left( h_k(\bx;\bstheta) \right)$ is the cross-entropy loss between the probability vector $\bfh(\bx_i;\bstheta)$ and $\bfy_i$. 

\noindent\textbf{\textit{Knowledge Distillation and its Multi-Teacher Variant.}}
Knowledge distillation (KD), introduced by \citet{hinton2015distilling}, is a procedure where a simpler model (student) learns from a larger model (teacher). In the classical setup of KD, a pre-trained complex deep neural network model serves as the teacher. For each data point $i$ in the training set $\mathcal{D}$, the teacher model $M_t$ outputs a probability vector $\boldp_{i}=(p_{i1},\ldots p_{iK})^T$, where $p_{ij}$ represents the predicted class probability that $\bx_i$ belongs to class $j$. Mathematically, $p_{ij}$ approximates $\mathbb{P}(y_{ik} = 1 \mid \bx_i)$ according to model $M_t$.  
In this work, teacher models are treated as closed-source models, i.e., their structure and parameters are unknown.   This is common in practice as many popular pre-trained large language models, e.g., ChatGPT and Claudia, are closed-source.

In contrast, the student model $M_s$ is transparent, defined by a deep neural network with a known structure $\bfh(\cdot ;\cdot)$ 
and unknown parameter $\bstheta$. 
The student model $M_s$ is trained using both the training sample $\cD$ and the corresponding teacher model's predicted class probabilities $\boldp = \{\boldp_{i}\}_{i=1,\cdots,N}$.
The student model $M_s$ outputs a probability vector $\boldq_{i}=(q_{i1},\ldots q_{iK})^T$ for data point $i$, where $\boldq_{i} = \bfh(\bx_i;\bstheta)$ and $q_{ij}$ represents the predicted class probability that $\bx_i$ belongs to class $j$. 
KD measures the discrepancy between the student and teacher models with
\begin{equation}\label{eq:student_teach_discrepancy}
\begin{aligned}
\tilde{\mathcal{L}}( \bstheta \mid \mathcal{D}, \boldp ) &= \frac{1}{N}\sum_{i=1}^N \CE(\boldp_i, \bfh(\bx_i;\bstheta)) = - \frac{1}{N}\sum_{i=1}^N\sum_{k=1}^K p_{ik} \log\left( h_k(\bx_i;\bstheta) \right),
 \end{aligned}
\end{equation}
which is the sample mean of the cross-entropy $\CE(\boldp_i, \bfh(\bx_i;\bstheta))$ across the training data. 
To leverage the information from  both the training data  and the teacher model's predictions, KD  estimates the parameter $\bstheta$ by minimizing the objective function,
\begin{equation}\label{eq:opt_KD_loss}
\begin{aligned}
 \mathcal{L}^{\mathrm{KD}}( \bstheta \mid \mathcal{D}, \boldp, \lambda)  =  \mathcal{L}( \bstheta \mid \mathcal{D} ) + \lambda\tilde{\mathcal{L}}( \bstheta \mid \mathcal{D}, \boldp ) ,
\end{aligned}
\end{equation}
where  
$\mathcal{L}^{\mathrm{KD}}(  \bstheta \mid \mathcal{D}, \boldp, \lambda)$
combines the losses in Equation (\ref{eq:loss}) and Equation (\ref{eq:student_teach_discrepancy}) with a tuning parameter $\lambda \in (0,\infty)$  to balance their contributions.

In practice, a student can learn from multiple teachers, with the collective guidance of these teachers providing a comprehensive framework beneficial for training the student network. 
Consider $G$ pre-trained teacher models, denoted by $M_t^{(g)}$ for $g=1,\cdots,G$, we have the predicted class probabilities, denoted by $\boldp^{(g)} = \{\boldp_{i}^{(g)}\}_{i=1,\cdots,N}$, where $\boldp_i^{(g)}=(p_{i1}^{(g)},\ldots p_{iK}^{(g)})^T = M_t^{(g)}(\bx_i)$.
A natural way to formulate the multi-teacher KD loss is:
\begin{equation}\label{eq:opt_multi_KD_loss}
\begin{aligned}
 \mathcal{L}^{\operatorname{Multi-KD}}(  \bstheta \mid \mathcal{D}, \boldp, \lambda)  =  \mathcal{L}(  \bstheta \mid \mathcal{D} ) + \lambda 
     \sum_{g=1}^G w^{(g)}\tilde{\mathcal{L}}^{(g)}( \bstheta \mid \mathcal{D}, \boldp^{(g)} ),
\end{aligned}
\end{equation}
where 
$w^{(g)}\in (0,\infty)$ is the weight assigned to $M_t^{(g)}$, $\sum_{g=1}^G w^{(g)}=1$, and $\tilde{\mathcal{L}}^{(g)}( \bstheta \mid \mathcal{D}, \boldp^{(g)} )$ is analogously defined as in Equation (\ref{eq:student_teach_discrepancy}). Extensions, including assigning sample-wise weights and distilling intermediate-layer representations, have also been explored \citep{you2017learning, fukuda2017efficient,wu2019multi}.

While existing research has advanced our understanding of KD and its extensions to multi-teacher scenarios, significant gaps remain in capturing interactions among teachers and providing uncertainty quantification for the model predictions. In this work, we bridge these gaps by introducing a Bayesian framework for multi-teacher KD that explicitly models teacher-student interactions through a mixture prior, enabling both robust knowledge integration and uncertainty-aware learning. We provide details in the following section.

\section{Bayesian Knowledge Distillation}\label{sec:methodology}

Unlike the previous formulations of multi-teacher KD, in this work,  we develop multi-teacher KD in a Bayesian paradigm, which naturally enables uncertainty quantifications. 

\subsection{Bayesian Model with a Teacher-Informed Prior}

In our proposed Bayesian model, since each observation belongs to exactly one class,
we assume that the class label $\bfy_i$ follows a multinomial distribution given the predictor $\bx_i$,
\begin{equation}
\label{eq: model}
\bfy_i | \bx_i  \sim \text{Multinomial}(1, \boldq_i), 
\end{equation}
where $i=1, \ldots, N$, and the multinomial probability is provided by a deep neural network, i.e.,  $\boldq_i=\bfh(\bx_i;\bstheta)$. 

In the Bayesian paradigm, parameter $\bstheta$ is assumed to be random. 
Our goal is to develop a proper prior distribution. A straightforward approach might be to directly place a prior on $\bstheta$ as commonly did in Bayesian generalized linear models, see Chapter 14 in \cite{gelman2013bayesian}. However, in our study, the teacher models are close-source, meaning
the teacher models’ architectures and parameters are inaccessible, with only access to their outputs. Even in the case that the teacher models are open-source,  since the teacher and student models differ in structure, the teacher models’ knowledge cannot be directly encoded into a prior for the parameters of the student model.  Without any further information, the only option for us to impose objective (aka noninformative) prior for $\bstheta$, which is not satisfiable.
As an alternative, we propose using the teacher models’ predicted class probabilities to build the prior, referred to as the teacher-informed prior (TIP), for $\bstheta$. 
We set the following mixture distribution \citep{McLachlan2000} as our TIP for 
$\bstheta$, 
\begin{equation}\label{eq:prior_KD_theta}
 \pi(\bstheta\mid\{\boldp_{i}^{(g)}\}_{i=1,\cdots,N}^{g=1,\cdots,G}) 
    \propto \prod_{i=1}^N \sum_{g=1}^G  w_i^{(g)} f( \bfh(\bx_i; \bstheta)\mid\boldp_{i}^{(g)}),
\end{equation}
where $\propto$ denotes proportionality, $G$ is the number of mixture components, $w_i^{(g)}$ are the weights assigned to the component $g$ on data point $\bx_i$,
$f(\cdot\mid\boldp_{i}^{(g)})$ are probability density functions. 
In this work, we consider 
$f(\cdot\mid\boldp_{i}^{(g)})$
is the density function of a Dirichlet distribution $Dir(\mathbf{1}_K+\lambda \boldp_{i}^{(g)})$, where  $\lambda$ is a tuning parameter weighting the contribution of teacher model predictions. 
Hence, we have
\begin{equation}\label{eq:prior_KD_q}
 f(\boldq_i \mid\boldp_{i}^{(g)})  = \frac{1}{B(\mathbf{1}_K+\lambda \boldp_{i}^{(g)})}\prod_{k=1}^K (q_{ik})^{\lambda p_{ik}^{(g)}},
\end{equation}
where $B(\cdot)$ is the multivariate Beta function. Note that the mode of $f(\cdot \mid\boldp_{i}^{(g)})$ is $\boldp_{i}^{(g)}$ for $\lambda > 0$.  
Furthermore, we show that $\pi(\bstheta\mid\{\boldp_{i}^{(g)}\}_{i=1,\cdots,N}^{g=1,\cdots,G}) $ is a proper prior with the following proposition. We relegate the proof to Appendix \ref{supp:proof}.
\begin{proposition}\label{lemma:proper}
    Consider the probability density function $f(\boldq\mid\boldp_{i}^{(g)})$  as defined in Equation (\ref{eq:prior_KD_q}) with a constant $\lambda > 0$. 
    Assuming that the parameters of the student model lie in a compact space, then $\pi(\bstheta\mid\{\boldp_{i}^{(g)}\}_{i=1,\cdots,N}^{g=1,\cdots,G}) $ is a proper prior.
\end{proposition}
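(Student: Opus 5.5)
The approach is to show directly that the unnormalized function
\[
\tilde\pi(\bstheta) = \prod_{i=1}^N \sum_{g=1}^G w_i^{(g)} f\bigl(\bfh(\bx_i;\bstheta)\mid\boldp_i^{(g)}\bigr)
\]
is nonnegative, measurable, and has a finite, strictly positive integral over the compact parameter space $\Theta$. Properness then follows by normalizing. Throughout I take the weights $w_i^{(g)}$ to be nonnegative with $\sum_g w_i^{(g)}=1$, as in the multi-teacher KD formulation, and I integrate with respect to Lebesgue measure on $\Theta\subset\RR^d$.

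\textbf{Step 1 (uniform upper bound).} From Equation (\ref{eq:prior_KD_q}), each $f(\boldq\mid\boldp_i^{(g)})$ equals $B(\mathbf{1}_K+\lambda\boldp_i^{(g)})^{-1}\prod_k q_k^{\lambda p_{ik}^{(g)}}$. Since $\boldq=\bfh(\bx_i;\bstheta)$ lies in the simplex, we have $q_k\in[0,1]$. Because $\lambda p_{ik}^{(g)}\ge 0$, each factor $q_k^{\lambda p_{ik}^{(g)}}\le 1$, with the convention $0^0=1$. Hence
\[
f\le C_i^{(g)}:=1/B(\mathbf{1}_K+\lambda\boldp_i^{(g)}),
\]
and this constant is finite and positive because every Dirichlet parameter is at least $1$. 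The mixture is then bounded by $\max_g C_i^{(g)}$, and $\tilde\pi(\bstheta)\le \prod_i\max_g C_i^{(g)}=:C<\infty$ uniformly in $\bstheta$. Compactness of $\Theta$ gives finite Lebesgue measure, so $\int_\Theta\tilde\pi\le C\,|\Theta|<\infty$.

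\textbf{Step 2 (measurability).} For fixed inputs, $\bstheta\mapsto\bfh(\bx_i;\bstheta)$ is continuous for the architectures in (\ref{dnn_def}), being a composition of linear maps, ReLU, and softmax. The map $\boldq\mapsto f(\boldq\mid\boldp)$ is continuous on the closed simplex, since all exponents are nonnegative. Therefore $\tilde\pi$ is continuous, hence measurable.

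\textbf{Step 3 (positivity, the main obstacle).} A proper prior also needs $\int_\Theta\tilde\pi>0$. If the softmax output is used, $\bfh(\bx_i;\bstheta)$ has strictly positive entries for every $\bstheta$, so each $f>0$. Because some $w_i^{(g)}>0$, every mixture factor is positive, and so $\tilde\pi>0$ everywhere on $\Theta$. Continuity then gives a positive integral, provided $\Theta$ has positive Lebesgue measure, for instance a compact set with nonempty interior. This is the step where assumptions must be stated carefully. If an output could hit the boundary of the simplex, I would instead argue via continuity that $\tilde\pi>0$ on a neighbourhood of some interior point. If $\Theta$ is not full-dimensional, I would replace Lebesgue measure by the natural base measure on $\Theta$.

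Combining the three steps, the normalizing constant $Z=\int_\Theta\tilde\pi\in(0,\infty)$ exists, and $\pi=\tilde\pi/Z$ is a proper density.
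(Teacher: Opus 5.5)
Your proof is correct and takes the same route as the paper. Both arguments bound the integrand uniformly, using $0\le h_k(\bx_i;\bstheta)\le 1$ with nonnegative exponents, and then note that a bounded function has a finite integral over a compact, hence finite-measure, parameter set. Your version is somewhat more careful than the paper's: you bound the mixture by $\prod_i \max_g 1/B(\mathbf{1}_K+\lambda\boldp_i^{(g)})$ instead of by $1$ (the constants $1/B(\cdot)$ can exceed $1$), and you also verify measurability and that the normalizing constant is strictly positive, which the paper leaves implicit.
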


According to our best knowledge,  the proposed TIP is novel in the statistical literature.
The proposed TIP has four distinguishing features: (1) Broad knowledge integration. 
It is a prior integration of the information from both teacher models and task-specific training data. The teacher models are typically pre-trained using extraneous data, and our training data are fed into the teacher models to yield predicted probabilities. Therefore, it captures extensive external knowledge while maintaining direct relevance to the task.
(2) Statistical robustness. It reduces overfitting risk by incorporating external information, making it more robust than purely data-driven priors used in classical Bayesian analysis. This is particularly advantageous with limited training data.
(3) Adaptability. It offers adaptability by incorporating multiple teacher models with automatic weight adjustment based on model quality, and the prior can be updated when better teacher models become available.
(4) Practically user-friendly. It relies less on expert elicitation and is scalable to different domains, including texts, images, and numerical data. 
The key innovation is the principled integration of external knowledge (LLMs) with empirical data, offering better generalization than either approach alone.

In the general case with multiple teachers, the posterior density of $\bstheta$ is,
\begin{small}
\begin{align}\label{eq:post}
    \pi(\bstheta \mid\mathcal{D}, \boldp, \lambda,  \bfh(\cdot;\cdot) ) 
    & \propto 
    \exp\left\{ \sum_{i=1}^N \sum_{k=1}^K y_{i k} \log \left(h_k\left(\bx_i; \bstheta \right)\right) \right. \nonumber\\
    & \left. + \sum_{i=1}^N \log \left[\sum_{g=1}^G w_i^{(g)} \frac{1}{B\left(\mathbf{1}_k+\lambda \boldp_i^{(g)}\right)} \prod_{k=1}^K\left(h_k\left(\bx_i; \bstheta\right)\right)^{\lambda p_{ik}^{(g)}}\right] \right\}.
\end{align}
\end{small}
Please note that when $G=1$, Equation (\ref{eq:post}) reduces to a single-teacher case \citep{fangbayesian}. We present the discussion of this special case in Appendix \ref{supp:add_single}.

\begin{figure}[t]
\vskip 0.2in
\vspace{-16pt}
\centering
\includegraphics[width=0.95\linewidth]{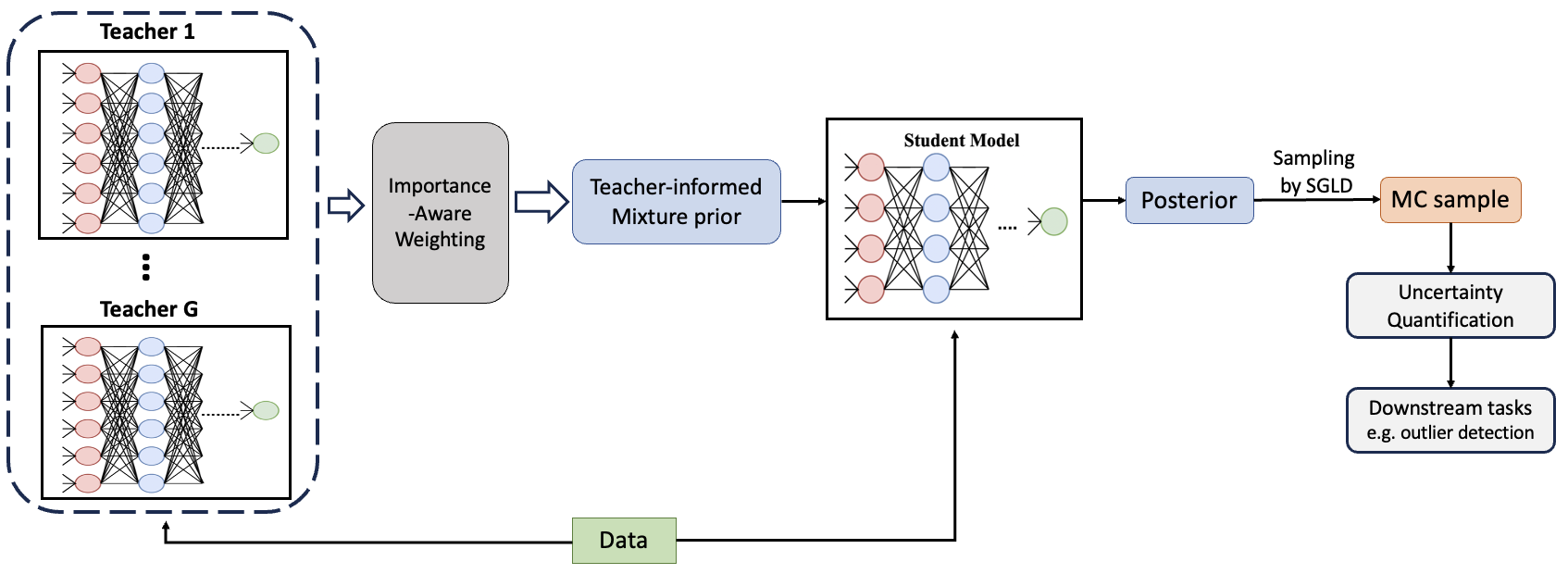}
\vspace{-0.1cm}
\caption{\textbf{The multiple teacher Bayesian knowledge distillation (MT-BKD) framework.} A teacher-informed prior is established for the student model's parameters based on the predicted probabilities from multiple teacher models, and the posterior distribution is derived. An importance-aware weighting mechanism balances contributions from the teachers. The stochastic Gradient Langevin Dynamics (SGLD) method is then applied to generate Monte Carlo samples from the posterior. Uncertainty quantification of the predictions and the subsequent downstream tasks, such as outlier detection, can be achieved accordingly.} \label{fig:flowchart}
\vskip -0.1in
\end{figure}

Applying the mixture prior defined in Equation (\ref{eq:prior_KD_theta}) leads to improved results compared to the traditional multi-teacher knowledge distillation (KD) strategy described in Equation (\ref{eq:opt_multi_KD_loss}). Specifically, the traditional multi-teacher KD strategy is equivalent to independently deriving posterior distributions from each teacher model and combining them using logarithmic opinion pooling \citep{genest1986characterization, garthwaite2005statistical}. We establish the following theorem, with the proof presented in Appendix \ref{supp:proof}.


\begin{theorem}\label{thm:equiv_multi}
    Given the prior distribution defined in Equation (\ref{eq:prior_KD_theta}) and Equation (\ref{eq:prior_KD_q}), the traditional multi-teacher KD strategy in Equation (\ref{eq:opt_multi_KD_loss}) is equivalent to deriving each teacher model’s posterior independently and combining them using logarithmic opinion pooling.
\end{theorem}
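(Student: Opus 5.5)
The plan is to write each single-teacher posterior explicitly and then check that their geometric (logarithmic) pool reproduces, up to an additive constant, the negative of the multi-teacher KD objective in Equation (\ref{eq:opt_multi_KD_loss}).

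\textbf{Step 1: single-teacher posteriors.} For teacher $g$, take the teacher-informed prior of Equation (\ref{eq:prior_KD_theta}) with $G=1$, which equals $\prod_{i=1}^N f(\bfh(\bx_i;\bstheta)\mid \boldp_i^{(g)})$ with $f$ the Dirichlet density in Equation (\ref{eq:prior_KD_q}). Combining it with the multinomial likelihood of Equation (\ref{eq: model}) gives
\begin{equation*}
\pi^{(g)}(\bstheta\mid \mathcal{D},\boldp^{(g)}) \propto \exp\Big\{\sum_{i=1}^N\sum_{k=1}^K y_{ik}\log h_k(\bx_i;\bstheta) + \lambda\sum_{i=1}^N\sum_{k=1}^K p_{ik}^{(g)}\log h_k(\bx_i;\bstheta)\Big\}.
\end{equation*}
The Beta-function normalizers $B(\mathbf{1}_K+\lambda\boldp_i^{(g)})$ do not depend on $\bstheta$, so they drop into the proportionality constant. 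Written in terms of the losses, this is $\pi^{(g)}\propto \exp\{-N[\mathcal{L}(\bstheta\mid\mathcal{D})+\lambda\tilde{\mathcal{L}}^{(g)}(\bstheta\mid\mathcal{D},\boldp^{(g)})]\}$. Each single-teacher posterior is proper by Proposition \ref{lemma:proper} under the compactness assumption.

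\textbf{Step 2: logarithmic pooling.} The logarithmic opinion pool with weights $w^{(g)}$, $\sum_g w^{(g)}=1$, is
\begin{equation*}
\pi_{\mathrm{pool}}(\bstheta)\propto\prod_{g=1}^G \big[\pi^{(g)}(\bstheta\mid\mathcal{D},\boldp^{(g)})\big]^{w^{(g)}}.
\end{equation*}
Take logarithms and substitute Step 1:
\begin{equation*}
\log\pi_{\mathrm{pool}}(\bstheta) = -N\Big[\sum_g w^{(g)}\mathcal{L}(\bstheta\mid\mathcal{D}) + \lambda\sum_g w^{(g)}\tilde{\mathcal{L}}^{(g)}\Big] + C.
\end{equation*}
Because the weights sum to one, $\sum_g w^{(g)}\mathcal{L}=\mathcal{L}$. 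Therefore
\begin{equation*}
\log\pi_{\mathrm{pool}}(\bstheta) = -N\,\mathcal{L}^{\operatorname{Multi-KD}}(\bstheta\mid\mathcal{D},\boldp,\lambda) + C.
\end{equation*}
It follows that the pooled distribution is the Gibbs posterior $\exp\{-N\mathcal{L}^{\operatorname{Multi-KD}}\}$. Its mode (MAP) is exactly the minimizer of the traditional multi-teacher KD objective, which is the claimed equivalence. If sample-wise weights $w_i^{(g)}$ are used, the same computation goes through pointwise in $i$, by pooling the per-observation factors.

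\textbf{Main obstacle.} The delicate point is stating precisely what \emph{equivalent} means. I would take it to mean that the two procedures yield the same distribution up to normalization, and hence the same optimizer. Two details must be checked for this to hold. First, the normalizing constants of each $\pi^{(g)}$ and the Beta functions are $\bstheta$-free, so raising them to the powers $w^{(g)}$ does not affect the result. Second, the pooled density is itself integrable; this follows because it is bounded by the maximum of the proper densities $\pi^{(g)}$ on the compact parameter space. I would also remark on the contrast with Equation (\ref{eq:post}): there the mixture sits inside the logarithm, giving $\log\sum_g$, whereas the traditional strategy corresponds to the geometric pool, giving $\sum_g\log$. This contrast motivates the claimed advantage of the mixture prior.
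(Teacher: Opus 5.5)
Your proposal is correct and follows essentially the same route as the paper's proof. You form each single-teacher posterior from the $G=1$ teacher-informed prior, pool them geometrically with weights $w^{(g)}$, and identify the negative log of the pooled density with $\mathcal{L}^{\operatorname{Multi-KD}}$ up to an additive constant, so the pooled mode is the multi-teacher KD minimizer. Your version is also slightly more careful than the paper's: you track the factor $N$ that comes from the $1/N$ normalization of the losses, and you make explicit that $\sum_g w^{(g)} = 1$ is what collapses the pooled likelihood term back to $\mathcal{L}(\bstheta\mid\mathcal{D})$.
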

Traditional multi-teacher KD algorithms typically provide only point estimates, overlooking the uncertainty in model parameters. In contrast, our proposed MT-$\ours$ method estimates the full posterior distribution, equipping the student model with uncertainty quantification capabilities. This is particularly important when the posterior distribution is multimodal. Point estimates in such cases may fall between modes, failing to capture the true distribution and potentially leading to misleading conclusions.
In addition, we employ a mixture prior approach in MT-$\ours$. We begin with a mixture prior with fixed weights representing our initial beliefs about each component. Once data is observed, we compute the posterior distribution based on this mixture prior. Although the initial prior weights remain fixed, the posterior naturally shifts in favor of components that better explain the data, balancing our assumptions with empirical evidence. If multiple priors fit the data well, the posterior may be multimodal, reflecting multiple plausible parameter configurations. By maintaining a mixture of priors, we preserve these distinct possibilities while allowing the data to emphasize those that are most consistent.
Conversely, combining individual posteriors via logarithmic opinion pooling first updates each prior with data to obtain individual posteriors and then combines them via a weighted geometric mean of their log-transformed values. As a result, each prior retains its predetermined weight in the final combined posterior, regardless of its performance on the data.


\begin{figure}[h]
\vskip 0.2in
\vspace{-16pt}
\centering
\includegraphics[width=0.6\linewidth]{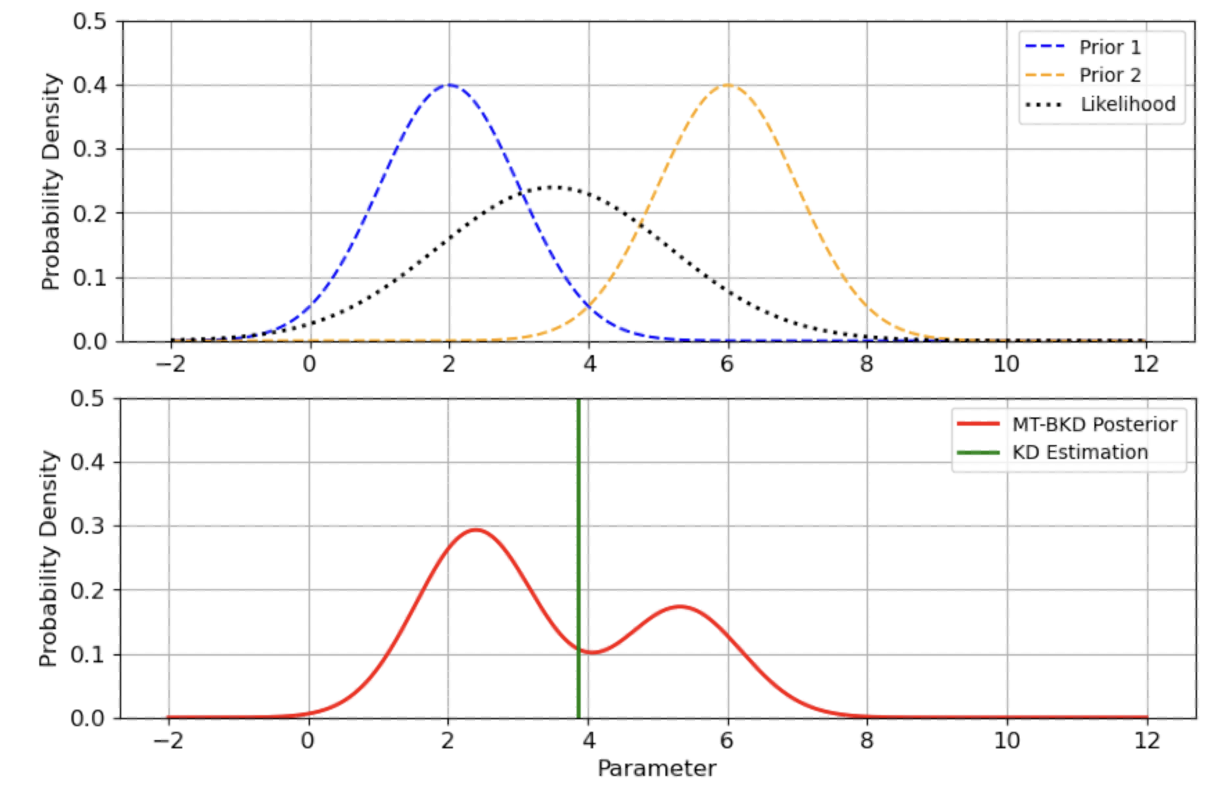}
\vspace{-0.1cm}
\caption{Comparison of posterior distributions obtained through MT-$\ours$ and the estimate of multi-teacher KD. The figure illustrates the prior distributions, the data distribution, the resulting posterior distributions derived from MT-BKD (red line), and the estimation achieved through multi-teacher KD (green line).} \label{fig:toy_mix}
\vskip -0.1in
\end{figure}

We illustrate the advantage of MT-BKD using a toy example with two Gaussian priors centered at $\mu_1=2.0$ and $\mu_2=6.0$ ($\sigma^2=1.0$). The observed data has a mean of $3.5$, which lies between these priors, indicating no strong preference for either. Figure \ref{fig:toy_mix} shows that MTBKD (red line) yields a bimodal posterior, accurately reflecting the balanced support for both priors while slightly favoring the one better supported by the data. In contrast, traditional multi-teacher KD (green line) collapses to a point estimate, overlooking the bimodal nature implied by the priors and the data.

\subsection{Importance-Aware Distillation}\label{sec:weight}

Equation (\ref{eq:prior_KD_theta}) highlights the critical role of the weights $w_i^{(g)}$ in distilling knowledge from multiple teacher models into a unified student model. Since teacher networks vary in their effectiveness across different data, identifying and leveraging these differences is essential for successfully integrating multiple teachers.
We define $w_i^{(g)}$ for the teacher model $M_t^{(g)}$ as
\begin{align}\label{eq:weight}
    w_i^{(g)} & = 
    \frac{ \xi(H(\boldp_i^{(g)})) }{ \sum_{g=1}^G \xi(H(\boldp_i^{(g)})) } ,
\end{align}
where $H(\cdot)$ represents the entropy, calculated as $H(\boldp_i^{(g)}) = -\sum_{k} p_{ik}^{(g)} \log(p_{ik}^{(g)})$, and $\xi$ is a monotonically descreasing function. 
Intuitively, from the perspective of information theory, information entropy is used to describe the degree of randomness and uncertainty of information, so it is reasonable to apply information entropy to quantify the knowledge scale of the teacher model on the given data point.
Thus, we assign higher distillation weights to teacher models with lower entropy predictions. This allows the student model to learn more strongly from the teachers that have developed greater expertise on each specific data point, leading to more effective knowledge transfer.


We now discuss the effect of $\lambda$.
In the extreme case of $\lambda = 0$, $f(\boldq\mid\boldp_{i}^{(g)})$ becomes the density of a symmetric Dirichlet distribution, $Dir(\mathbf{1}_K)$. This implies that there is no prior knowledge favoring one category over another when classifying each data point.
For the case of $\lambda \to \infty$, we derive the following proposition with proof in Appendix \ref{supp:proof},
\begin{proposition}\label{thm:converge}
    Consider the probability density function $f(\boldq\mid\boldp_{i}^{(g)})$ as defined in Equation (\ref{eq:prior_KD_q}), as $\lambda \to \infty$, we have
    $f(\boldq \mid\boldp_{i}^{(g)}) \longrightarrow \delta(\boldq-\boldp_{i}^{(g)}) ,$
    where $\delta(\cdot)$ is the multivariate Dirac delta function.
\end{proposition}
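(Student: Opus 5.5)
We read the convergence in the distributional sense that is standard for Dirac limits on the probability simplex $\Delta_K=\{\boldq\in[0,1]^K:\sum_k q_k=1\}$. The claim is that for every bounded continuous test function $\phi$ on $\Delta_K$,
\[
\int_{\Delta_K}\phi(\boldq)\,f(\boldq\mid\boldp_i^{(g)})\,d\boldq\;\longrightarrow\;\phi(\boldp_i^{(g)}),\qquad \lambda\to\infty .
\]
Equivalently, $Dir(\mathbf{1}_K+\lambda\boldp_i^{(g)})$ converges weakly to the point mass at $\boldp_i^{(g)}$.

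The plan is to prove concentration through the first two moments and then apply Chebyshev. Write $\boldsymbol\alpha=\mathbf{1}_K+\lambda\boldp_i^{(g)}$ and $\alpha_0=\sum_k\alpha_k=K+\lambda$, using $\sum_k p_{ik}^{(g)}=1$. The standard Dirichlet moment formulas give the following.
\begin{itemize}
\item The mean is
\[
\mathbb{E}[q_k]=\frac{1+\lambda p_{ik}^{(g)}}{K+\lambda}\to p_{ik}^{(g)} .
\]
\item The variance is
\[
\operatorname{Var}(q_k)=\frac{\alpha_k(\alpha_0-\alpha_k)}{\alpha_0^2(\alpha_0+1)}\le\frac{1}{4(K+\lambda+1)}\to 0 .
\]
\end{itemize}
Combining these, $\mathbb{E}\|\boldq-\boldp_i^{(g)}\|^2$ splits as a sum of variances plus squared biases. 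The bias term is $\bigl|\mathbb{E}[q_k]-p_{ik}^{(g)}\bigr|=|1-Kp_{ik}^{(g)}|/(K+\lambda)=O(1/\lambda)$, so the whole quantity is $O(1/\lambda)$. Markov's inequality then gives
\[
\mathbb{P}\bigl(\|\boldq-\boldp_i^{(g)}\|>\epsilon\bigr)\to0 \quad\text{for every } \epsilon>0 .
\]

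To finish, fix $\phi$ bounded and continuous. Since $\Delta_K$ is compact, $\phi$ is uniformly continuous there. Split the integral of $|\phi(\boldq)-\phi(\boldp_i^{(g)})|$ against $f$ into two regions:
\begin{itemize}
\item on $\{\|\boldq-\boldp_i^{(g)}\|\le\epsilon\}$ the contribution is at most the modulus of continuity of $\phi$ at $\epsilon$;
\item on the complement it is at most $2\|\phi\|_\infty$ times the tail probability above, which vanishes as $\lambda\to\infty$.
\end{itemize}
Letting $\lambda\to\infty$ and then $\epsilon\to0$ gives the claim.

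I do not expect a substantive obstacle. The moment computation is routine, and the edge case where some $p_{ik}^{(g)}=0$ causes no trouble because $\alpha_k=1>0$ keeps the density well defined. The main point needing care is interpretive: the pointwise density limit degenerates, being $0$ off $\boldp_i^{(g)}$ and $+\infty$ at $\boldp_i^{(g)}$ when that point is interior. So I would state explicitly that ``$\longrightarrow\delta$'' means weak convergence, i.e.\ convergence of integrals against continuous test functions.

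As a consistency check, and in case a density-level statement is wanted, I would also note the log-density. Up to normalization it equals $\lambda\sum_k p_{ik}^{(g)}\log q_k$, which Gibbs' inequality shows is uniquely maximized at $\boldq=\boldp_i^{(g)}$ with a strictly negative gap elsewhere. This gives the same concentration via a Laplace-type argument, but the moment route is cleaner.
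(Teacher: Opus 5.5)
Your proof is correct, but it follows a different route from the paper's.

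The paper argues at the level of the density. It writes the normalizing constant with Gamma functions and applies Stirling's formula, first for integer $\lambda p_k$ and then asserting the same for general $\lambda p_k$, to get $f(\boldq\mid\boldp)\sim\lambda^{(K-1)/2}\prod_k (q_k/p_k)^{\lambda p_k}$. It then maximizes $\prod_k (q_k/p_k)^{p_k}$ over the simplex by Lagrange multipliers, which is in effect Gibbs' inequality, and finds the unique maximum value $1$ at $\boldq=\boldp$. From this it concludes that $f\to\infty$ at $\boldp$, $f\to 0$ elsewhere, and $\int f=1$, so $f\to\delta$. It also assumes all $p_k\neq 0$ ``without loss of generality.'' This is essentially the Laplace-type consistency check in your last paragraph.

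The density route gives a sharper local picture: the $\lambda^{(K-1)/2}$ peak height, and the geometric decay of $\prod_k (q_k/p_k)^{\lambda p_k}$ away from $\boldp$. Your moment-plus-Chebyshev route is shorter and more complete. It needs no Stirling asymptotics or integer/non-integer split, it covers $p_k=0$ directly, it gives an explicit $O(1/\lambda)$ bound on $\mathbb{E}\|\boldq-\boldp\|^2$, and it states precisely that the limit is meant in the weak sense.

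Most importantly, your tail bound $\mathbb{P}(\|\boldq-\boldp\|>\epsilon)\to 0$ supplies a step the paper leaves implicit. Blow-up at $\boldp$, pointwise vanishing elsewhere, and unit mass do not by themselves force weak convergence to the point mass at $\boldp$. That also requires uniform decay on $\{\|\boldq-\boldp\|\ge\epsilon\}$, which does hold for the Dirichlet family but is not argued in the paper.
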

The above proposition shows that, as $\lambda$ approaches infinity, the student model assigns nonzero probability only when its predicted class probabilities exactly match the teacher’s, aligning with original knowledge distillation (KD) that fully trusts the teacher's probabilities at infinitely large $\lambda$.
In summary, a large $\lambda$ indicates strong reliance on the teacher's predictions, while a small $\lambda$ suggests limited teacher influence. 
In practice, $\lambda$ can be determined by selecting candidate values based on domain expertise and selecting the best one through cross-validation \citep{shao1993linear}.

\section{Uncertainty Quantification via Bayesian Predictive Inference}\label{sec:uncertainty}

In practice, it is very common to use the trained student model to analyze confidential data.  Analyzing confidential data using pre-trained teacher models as presented in the previous section may result in information leaking \citep{ray2023samsung}. 
In this section, we develop a method to quantify prediction uncertainty using the trained student model alone.

\subsection{Measurement of Prediction Uncertainty}
For a new data point $\tilde{\bx}$, we denote the trained student model's prediction as $\tilde{\boldq}=\bfh(\tilde{\bx}; \bstheta) $. 
Suppose that the label of $\tilde{\bx}$ is $\tilde{\bfy}$. Instead of estimating the predictive posterior distribution \citep{gelman1995bayesian} of the unobserved label $\tilde{\bfy}$, we introduce a single-valued metric to gauge our confidence in the model’s prediction.

A commonly used confidence measure for classification is the deviance \citep{spiegelhalter2014deviance}, defined as $\operatorname{dev}\left(\tilde\bfy, \tilde\boldq\right) = -2 \sum_{k=1}^K \tilde{y}_{k} \log(\tilde{q}_{k}) $, where $\tilde{y}_{k}$ and $\tilde{q}_{k}$ are the $k$th elements of $\tilde\bfy$ and $\tilde\boldq$.
This deviance can be interpreted as a generalized sum of squared residuals for $\tilde\boldq$. A higher deviance value suggests greater uncertainty in the model's predictions.
To make predictive inferences about $\operatorname{dev}\left(\tilde\bfy, \tilde\boldq\right)$, we use its posterior predictive distribution.
Under the multinomial model in Equation (\ref{eq: model}) and the posterior distribution of $\bstheta$ in Equation (\ref{eq:post}), we have
\begin{equation}\label{eq: post_pred_dev}
    \pi \left(\operatorname{dev}\left(\tilde\bfy, \tilde\boldq\right) | \mathcal{D}, \boldp, \lambda,  \bfh(\cdot;\cdot)   \right) =  \int \pi( \bstheta | \mathcal{D}, \boldp, \lambda,  \bfh(\cdot;\cdot)   )   p\left(\operatorname{dev}\left(\tilde\bfy, \tilde\boldq\right) | \bstheta \right) d \bstheta,
\end{equation}
where $p\left(\operatorname{dev}\left(\tilde\bfy, \tilde\boldq\right) | \bstheta \right)$ is the conditional probability density function of $\operatorname{dev}\left(\tilde\bfy, \tilde\boldq\right)$ given $\bstheta$. 
Finally, the posterior mean deviance is used to quantify the model’s prediction uncertainty.

\subsection{Posterior Sampling}
For the high dimension of the parameter $\bstheta$ and the complex structure of $\bfh(\cdot ; \bstheta)$, it may not be feasible to obtain an analytical solution for Equation (\ref{eq: post_pred_dev}).
Hence, we apply the stochastic Gradient Langevin Dynamics (SGLD) \citep{welling2011bayesian}, a variant of Langevin Monte Carlo, 
to efficiently sample from the complex, high-dimensional posterior distributions.
SGLD achieves this by seamlessly integrating the general approach of stochastic gradient descent with Langevin dynamics. 

Suppose $\bstheta^{\langle j-1 \rangle}$ is sampled in iteration $j-1$. Then in the $j$-th step, given a mini-batch of $m$ data points $\mathcal{D}^{\langle j \rangle} = \{(\bx_i^{\langle j \rangle}, \bfy_i^{\langle j \rangle})\}_{i=1,\cdots,m}$ and the class probability $\{\boldp_i^{(g)\langle j \rangle}\}_{i=1,\cdots,m}^{g=1,\cdots,G}$ predicted by the teacher model, SGLD generates the sample from posterior using gradient updates plus Gaussian noise,
\begin{small}
\begin{align}\label{eq:SGLD}
    &\bstheta^{\langle j \rangle}\nonumber
    = \nonumber \bstheta^{\langle j-1 \rangle}+\tau \nabla_{\bstheta} log \pi\left(\bstheta^{(j-1)} \mid \mathcal{D}^{\langle j \rangle},\{\boldp_i^{(g)\langle j \rangle}\}_{i=1,\cdots,m}^{g=1,\cdots,G} , \lambda,  \bfh \right)  +  \sqrt{2\tau}  \xi^{\langle j \rangle},
\end{align}    
\end{small}
where $\tau$ is the step size, and $\xi^{\langle j \rangle} $ is randomly sampled from $ N(0, I)$. 
SGD optimizes the log-likelihood to guide the sampling process toward regions of higher probability, while Langevin dynamics introduces controlled noise into the parameter updates, ensuring convergence to the full posterior distribution rather than merely its mode.
Specifically, in the limit of $j \rightarrow \infty$ and $\tau \to 0$, the probability distribution of $\bstheta^{\langle j \rangle}$, denoted as $\rho^{\langle j \rangle}$, converges to the posterior distribution of $\bstheta$.
By using gradient information and introducing controlled noise, SGLD becomes more efficient in handling high-dimensional data \citep{girolami2011riemann}. Furthermore, SGLD's use of mini-batches for gradient computation alleviates the computational burden associated with optimizing over entire datasets, making it particularly well-suited for large-scale datasets.


\begin{algorithm}[tb]
   \caption{Multi-Teacher Bayesian Knowledge Distillation (MT-BKD).}
   \label{algo:BKD}
\begin{algorithmic}
   \State {\bfseries Input:} $\mathcal{D}=\{(\bx_i, \bfy_i)\}_{i=1}^{N}$,$\bfh(\cdot;\cdot)$, $\tau$, $\lambda$, $r$. New dataset $\mathcal{T}=\{(\tilde \bx_s,\tilde \bfy_s)\}_{s=1}^{S}$
   \State 1. Get the output $\boldp^{(g)},\ g=1,\cdots,G$, of each teacher model for each data point in $\mathcal{D}$.
   \State 2. Build the mix prior as in Equation (\ref{eq:prior_KD_theta}) using weights defined in Equation (\ref{eq:weight}).
    \State 3. Calculate the posterior distribution as in Equation (\ref{eq:post}).
    \State 4. Generate Monte Carlo sample of $\bstheta$:
    \vspace{-6pt}
    \begin{itemize}
    \item[$\boldsymbol{\cdot}$] At iteration $j^{th}$ with a subset of $m$ data points $\mathcal{D}^{\langle j \rangle} = \{(\bx_i^{\langle j \rangle}, \bfy_i^{\langle j \rangle})\}_{i=1}^{m}$,
        \begin{itemize}
        \vspace{-3pt}
        \item[$\cdot$] Sample $\xi^{\langle j \rangle} \sim N(0, I)$,
        \vspace{-3pt}
        \item[$\cdot$] Sample $\boldsymbol{\theta}^{\langle j \rangle}$ using SGLD.
        \vspace{-6pt}
        \end{itemize}
    \end{itemize}
    \State 5. Predict the uncertainty $\{ \tilde{\Delta}_s \}_{s=1}^S$ for each data point in $\mathcal{T}$ using Equation (\ref{eq:mean_dev}). 
    \State 6. Construct credible intervals $\{ [0,\tau_s] \}_{s=1}^S$ for the deviance of the prediction on $\mathcal{T}$.
    \State {\bfseries Output:} $\{ \bstheta^{\langle j \rangle} \}_{j=1}^r$, $\{ \tilde{\Delta}_s \}_{s=1}^S$, and $\{ [0,\tau_s] \}_{s=1}^S$.
\end{algorithmic}
\end{algorithm}

Since the distribution of MC sample $\bstheta^{\langle j \rangle}$ converges to the posterior distribution of $\bstheta$ as $j\to \infty$, it allows for a precise estimation of the characteristics of $\bstheta$. Consequently, it fosters a comprehensive analysis and understanding of our model covering various aspects. One aspect that we are particularly interested in is the model's prediction on new data. 

\subsection{Posterior Mean Deviance and Credible Interval}
With the posterior predictive distribution of the deviance in Equation (\ref{eq: post_pred_dev}), we propose the posterior mean deviance as a measure of predictive uncertainty and provide the credible interval of deviance.


With the MC sample $\{  \bstheta^{\langle j \rangle} \}_{j=1}^r$, we can estimate the 
posterior mean of $\operatorname{dev}\left(\tilde\bfy, \tilde\boldq\right)$ by, 
\begin{small}
\begin{equation}
\begin{aligned}\label{eq:mean_dev}
    \tilde{\Delta} &=  \frac{1}{r}\sum_{j=1}^{r} E\left( \operatorname{dev}(\tilde \bfy,\tilde{\boldq}) \mid \bstheta^{\langle j \rangle} \right)\nonumber\\
    &= -\frac{2}{r}\sum_{j=1}^{r} \sum_{k=1}^K \tilde{q}_{k}^{\langle j \rangle} \log(\tilde{q}_{k}^{\langle j \rangle}) ,
\end{aligned}
\end{equation}
\end{small}
where $\tilde{q}_{k}^{\langle j \rangle} $ is the $k$th element of $\tilde{\boldq}^{\langle j \rangle} = \bfh(\tilde\bx;\bstheta^{\langle j \rangle})$.


Now, we construct the credible interval for the deviance of the prediction.
For the deviance $\operatorname{dev}\left(\tilde\bfy, \tilde\boldq\right)$, its $1-\alpha$ credible interval,  denoted by $CI = [0,\tau]$, satisfies
\begin{small}
\begin{equation}\label{eq:CI}
    E(  \textbf{1}_{\{ \operatorname{dev}\left(\tilde\bfy, \tilde\boldq\right) \leq \tau \}} ) = 1-\alpha ,
\end{equation}
\end{small}
where $\alpha$ is the credible level.
We can empirically estimate $\tau$ by minimizing $L(\tau)$ where
\begin{small}
    \begin{align}\label{eq:empirical}
        L(\tau) =& \left| \frac{1}{r} \sum_{j=1}^{r} E(  \textbf{1}_{\{ \operatorname{dev}\left(\tilde\bfy, \tilde\boldq\right) \leq \tau \}}  \mid \bstheta^{\langle j \rangle} )-(1-\alpha) \right| \nonumber \\
       =&
      \left| \frac{1}{r} \sum_{j=1}^{r}\sum_{k=1}^K   \tilde{q}_k^{\langle j \rangle}  \textbf{1}_{\{ -2\log(\tilde{q}_{k}^{\langle j \rangle}) \leq \tau \}}  -(1-\alpha) \right|. 
    \end{align}
\end{small}
The function $L(\tau)$ denotes the difference between the left-hand side and the right-hand side of Equation (\ref{eq:CI}).

Constructing a credible interval on deviance, rather than directly on the prediction $\boldq$, presents a notable advantage. It enables the calculation of the coverage rate of the constructed interval using the true label. This is beneficial since, in most cases, the true labels of the testing dataset are readily available for evaluation, whereas the true class probabilities are typically elusive.
The MT-BKD algorithm is summarized in Algorithm \ref{algo:BKD}.




\section{Theoretical Analysis}
Here, we develop an asymptotic convergence of our proposed MT-BKD estimate.
Without the loss of generality, we only consider the binary classification problem here.
Let $\mu$ represent the predictor $\bx$'s true density function in the training sample of the student model while $\mu^{(g)}$ represents the predictor's true density function in the training sample of the $g$th teacher model for $g = 1, \ldots, G$.
Suppose the true parameter for the student model defined by Equation(\ref{dnn_def}) and (\ref{eq: model}) is $\bstheta_0$.
To define a good estimate, we consider the excess $\phi$-risk with the log loss,
\begin{equation}\label{eq:excess phi risk}
    \cE^\phi(\hat{\bstheta}_n,\bstheta_0) =  \cE^\phi(\hat{\bstheta}_n)-\cE^\phi(\bstheta_0),
\end{equation}
where $ \cE^\phi(\bstheta) = E_{\bx,\bfy}[\phi(\bfy^T \logit [\bfh(\bx;\bstheta)] )]$, and $\phi(x)= \log(1 + \exp(−x))$.

We define the quadratic functional $V(f) = \int_{\Omega_\bx} f^2 \mu(\bx) d\bx$.

The following conditions are assumed to hold.

(A1) For any $\bx \in \Omega_\bx$,  $\mu(\bx)>0$, $\max_{ 1\leq g \leq G} \mu^{(g)}(\bx) >0$. 

(A2) The minimizer $\bstheta^*_n$ in Equation (\ref{eq:loss}) yield a consistent estimate of $\bfh(\bx;\bstheta)$, i.e., the error $\int_{\Omega_\bx} [\bfh(\bx;\bstheta^*_n) - \bfh(\bx;\bstheta) ]^2 \mu(\bx) d\bx  = O_p(C_1(n)) $ and $\lim_{n \to \infty } C_1(n) = 0$. 

(A3) There exist a  positive constant $\varepsilon^{(g)}$ such that $\int_{\Omega_\bx} [M_t^{(g)}(\bx) - \bfh(\bx;\bstheta) ]^2 \mu^{(g)}(\bx) d\bx \leq \varepsilon^{(g)} $ for $g = 1,\ldots,G$.


Condition (A1) essentially requires that for any $\bx \in \Omega_\bx$, there exists at least one of the training samples' population distributions corresponding to the teacher models that assign a nonzero density to it.
Condition (A2) can be proven under several regularity conditions, e.g., $\bfh(\bx;\bstheta)$ is a $(p,C)$-smooth generalized hierarchical interaction model \citep{bauer2019deep}. Without loss of generality, we use $C_1(n)$ to represent its convergence rate while the detailed forms under different scenarios can be found in the existing literature \citep{gyorfi2006distribution, horowitz2007rate,bauer2019deep,schmidt2020nonparametric}. 
Condition (A3) requires that the estimation errors of the teacher models be finite.

\begin{theorem}[Convergence of MT-BKD]\label{thm:asymp}
The estimate of $\bstheta$ by maximizing Equation(\ref{eq:opt_multi_KD_loss}) yields a consistent estimate of $\bfh(\bx;\bstheta)$ with the optimal convergence rate upper bounded by 
\begin{equation}\label{eq:error rate}
    C_2(n) \wedge   G\cdot V \left( \left\{\sum_{g=1}^G \left[\frac{\mu^{(g)}(\bx)}{\mu(\bx) \varepsilon^{(g)} } \right]^{\frac{1}{2}} \right\}^{-1}\right)  ,
\end{equation}
where $C_2(n) \asymp C_1(n)$.
\end{theorem}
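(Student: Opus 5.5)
The plan is to treat the minimizer of the multi-teacher objective in Equation (\ref{eq:opt_multi_KD_loss}) as a regularized least-squares type estimator. At the population level it is a weighted combination of two targets: the data-driven target $\bfh(\bx;\bstheta)$ and the teacher targets $M_t^{(g)}(\bx)$. The error bound in Equation (\ref{eq:error rate}) is a minimum of two terms, and each term will come from one natural comparison.

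\emph{First term.} Choosing $\lambda$ small (or letting $\lambda\to 0$ with $n$) reduces the objective to the empirical risk in Equation (\ref{eq:loss}). Condition (A2) then delivers an error of order $C_1(n)$. More generally, for any fixed $\lambda$, I would compare the KD minimizer with $\bstheta_n^*$. A local quadratic (curvature) expansion of the log-loss $\phi$ near $\bstheta_0$ shows that the teacher term perturbs the minimizer by at most a constant multiple of the teacher discrepancy. Optimizing $\lambda$ therefore yields an error $C_2(n)\asymp C_1(n)$ whenever the data term dominates.

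\emph{Second term.} This is the teacher-only regime, in which the data term is negligible relative to the distillation term. Pointwise in $\bx$, after changing measure from $\mu^{(g)}$ to $\mu$, the minimizer of $\sum_g w^{(g)}\int (M_t^{(g)}-\bfh)^2\mu^{(g)}$ is a precision-weighted average of the teachers. I will choose the weights so that teacher $g$ has effective precision proportional to $[\mu^{(g)}(\bx)/(\mu(\bx)\varepsilon^{(g)})]^{1/2}$; condition (A1) guarantees that at least one of these is positive, so the combination is well defined. Bounding $|M_t^{(g)}-\bfh|$ through (A3) and applying Cauchy--Schwarz over the $G$ teachers gives a pointwise squared error of at most
\[
G\left\{\sum_{g}\left[\mu^{(g)}(\bx)/(\mu(\bx)\varepsilon^{(g)})\right]^{1/2}\right\}^{-2}.
\]
Integrating against $\mu$ produces exactly the functional $V(\cdot)$ in Equation (\ref{eq:error rate}).

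\emph{Combining.} Since the tuning parameters $\lambda$ and $w^{(g)}$ are free, the optimal rate is at most the minimum of the two bounds. This gives Equation (\ref{eq:error rate}) and consistency whenever either term vanishes.

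\emph{Main obstacle.} The hardest step is the second one. Condition (A3) controls only integrated errors under $\mu^{(g)}$, not pointwise errors, so extracting a pointwise precision-weighted bound is not immediate. I would first try to handle this by treating $\varepsilon^{(g)}$ as a local variance proxy, rescaling by the density ratio $\mu^{(g)}/\mu$, and applying Cauchy--Schwarz to the integrated form directly. If that fails, the fallback is to state the bound with local error functions $\varepsilon^{(g)}(\bx)$, which requires an extra regularity assumption.
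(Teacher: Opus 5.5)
There is a genuine gap, and it is the one you flag yourself: the second term. Your mechanism needs a pointwise bound on $\|M_t^{(g)}(\mathbf{x})-\boldsymbol{h}_0(\mathbf{x})\|_2$, where $\boldsymbol{h}_0=\boldsymbol{h}(\cdot;\boldsymbol{\theta}_0)$. Condition (A3) controls only an $L^2(\mu^{(g)})$ integral. Your fallback with local $\varepsilon^{(g)}(\mathbf{x})$ changes the hypothesis rather than proving the stated bound.

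Integrating on the squared scale does not rescue the argument either. Write $\alpha_g(\mathbf{x})$ for your normalized weights and $d_g=\|M_t^{(g)}-\boldsymbol{h}_0\|_2$. After $\|\sum_g\alpha_g(M_t^{(g)}-\boldsymbol{h}_0)\|_2^2\le G\sum_g\alpha_g^2d_g^2$, the term $\int\alpha_g^2(\mu/\mu^{(g)})\,d_g^2\,\mu^{(g)}\,d\mathbf{x}$ can be controlled through (A3) only by $\sup_{\mathbf{x}}(\alpha_g^2\mu/\mu^{(g)})\,\varepsilon^{(g)}$. With your weights, that is a supremum of $\{\sum_h[\mu^{(h)}/(\mu\varepsilon^{(h)})]^{1/2}\}^{-2}$, not its $\mu$-average, which is what $V$ is.

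The step also misdescribes the estimator in three ways:
\begin{itemize}
\item The distillation term in (\ref{eq:opt_multi_KD_loss}) averages over the student's sample. Its population version is therefore $\sum_g w^{(g)}\int\mathrm{CE}(M_t^{(g)},\boldsymbol{h})\,\mu\,d\mathbf{x}$; the densities $\mu^{(g)}$ enter only through (A3).
\item Global weights $w^{(g)}$ cannot produce the $\mathbf{x}$-dependent precisions you want.
\item Describing a population minimizer as a teacher average ignores both estimation error and whether the student class contains that average.
\end{itemize}

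The missing idea is to avoid characterizing the minimizer at all. The paper compares $\hat{\boldsymbol{\theta}}_n$ with $\boldsymbol{\theta}_0$ through optimality of the empirical objective. This gives one basic inequality,
\[
(1+\lambda)\,\mathcal{E}^{\phi}(\hat{\boldsymbol{\theta}}_n,\boldsymbol{\theta}_0)\le(\text{empirical-process deviations})+\lambda\,(\text{teacher-fit terms at }\boldsymbol{\theta}_0),
\]
so the minimum of the two rates comes from choosing $\lambda$ in a single inequality. This also replaces your curvature/perturbation argument for the first term, which would need identifiability and curvature conditions that are not among (A1)--(A3).

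The teacher-fit term is kept \emph{linear} in the discrepancy. The paper uses
\[
\mathrm{CE}[\boldsymbol{h}_0(\mathbf{x}),\boldsymbol{h}(\mathbf{x};\boldsymbol{\theta})]\le\mathrm{CE}[M_t^{(g)}(\mathbf{x}),\boldsymbol{h}(\mathbf{x};\boldsymbol{\theta})]+D_1\|M_t^{(g)}(\mathbf{x})-\boldsymbol{h}_0(\mathbf{x})\|_2 ,
\]
averages it with the $\mathbf{x}$-dependent weights $\tilde w^{(g)}(\mathbf{x})/W(\mathbf{x})$ coming from the mixture term $J$, and applies Cauchy--Schwarz in $\mathbf{x}$ after writing $\mu=(\mu/\sqrt{\mu^{(g)}})\sqrt{\mu^{(g)}}$:
\[
\int\frac{\tilde w^{(g)}}{W}\,\|M_t^{(g)}-\boldsymbol{h}_0\|_2\,\mu\,d\mathbf{x}\le\sqrt{\varepsilon^{(g)}}\Bigl(\int\frac{(\tilde w^{(g)}\mu/W)^2}{\mu^{(g)}}\,d\mathbf{x}\Bigr)^{1/2}.
\]
With $\tilde w^{(g)}\propto\sqrt{\mu^{(g)}/\varepsilon^{(g)}}$, the same weights you chose, the right-hand side equals $V(\cdot)^{1/2}$ with exactly the argument in the theorem, for every $g$. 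Only the integrated (A3) is used. Summing gives $G\,V(\cdot)^{1/2}$, which the paper records as $G\cdot V(\cdot)$; its write-up also stops before bounding the deviation terms.

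So your first idea in the obstacle paragraph, Cauchy--Schwarz on the integrated form, is the right one. It works only on this first-order CE/KL scale relative to $\boldsymbol{\theta}_0$, not on the squared error of a precision-weighted teacher average.
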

We find the estimation error of MT-BKD could be potentially reduced compared to if the second term of Equation (\ref{eq:error rate}) is smaller than $C_2(n)$. Smaller $\varepsilon^{(g)}$ leads to a smaller value of the second term.

\noindent\textbf{Remark}. If $\mu^{(g)} = \mu$ for $g = 1,\ldots,G$,   the second term of Equation (\ref{eq:error rate}) equals to $G\left(\sum_{g=1}^G \sqrt{1/\varepsilon^{(g)}}  \right)^{-2} $. 


\section{Simulation}\label{sec:simul}

We conduct simulation studies to evaluate the performance of the proposed MT-BKD method in classification tasks under two scenarios with varying dimensions and class counts.
Model performance is assessed using classification accuracy and mean square error (MSE) on class probability $p$, where MSE for MT-BKD compares the predicted posterior mode with the true $p$. Beyond overall accuracy, we focus on the model's ability to quantify uncertainty. Mean deviance, as discussed in Section \ref{sec:uncertainty}, is used to measure prediction uncertainty, and its relationship with the original data is examined. We further analyze coverage rates across different confidence levels and sample sizes, offering a comprehensive evaluation of the method's uncertainty quantification.


We compare MT-BKD with four benchmark methods: (1) the teacher model; (2) the original KD method; (3) a combination of original KD with a Bayesian neural network (BNN) \citep{blundell2015weight}; and (4) a combination of original KD with Monte Carlo dropout (Dropout) \citep{gal2016dropout}. Details on the implementation of BNN and Dropout are provided in Appendix \ref{supp:simu}. For each distillation method, we evaluate both weighted distillation across multiple teachers, as described in Section \ref{sec:weight}, and distillation with equal teacher weights.


\subsection{Simulation 1}\label{sec:simu1}
In this simulation, we consider binary classification problems. Let $p(\bx) = \frac{\exp(\eta(\bx))}{1+\exp(\eta(\bx))}$, and $Y\sim Bernoulli(p(\bx))$
with $\eta(\bx) = -0.5+0.1x_{1}+0.8x_{1}^2 - 0.8x_{2}^2.$
We generate a sample of size $20,000$ by letting $x_{1}\sim Unif(-3,3.5)$ and $x_{2}\sim Unif(-3,3)$. The top left panel of Fig. \ref{fig:s1_mean_dev} shows the probability distribution $p(\bx)$ mapped across $x_1$ and $x_2$ coordinates, with probability values indicated by color intensity.

\begin{figure}[htp]
\centering
\includegraphics[width=0.9\linewidth]{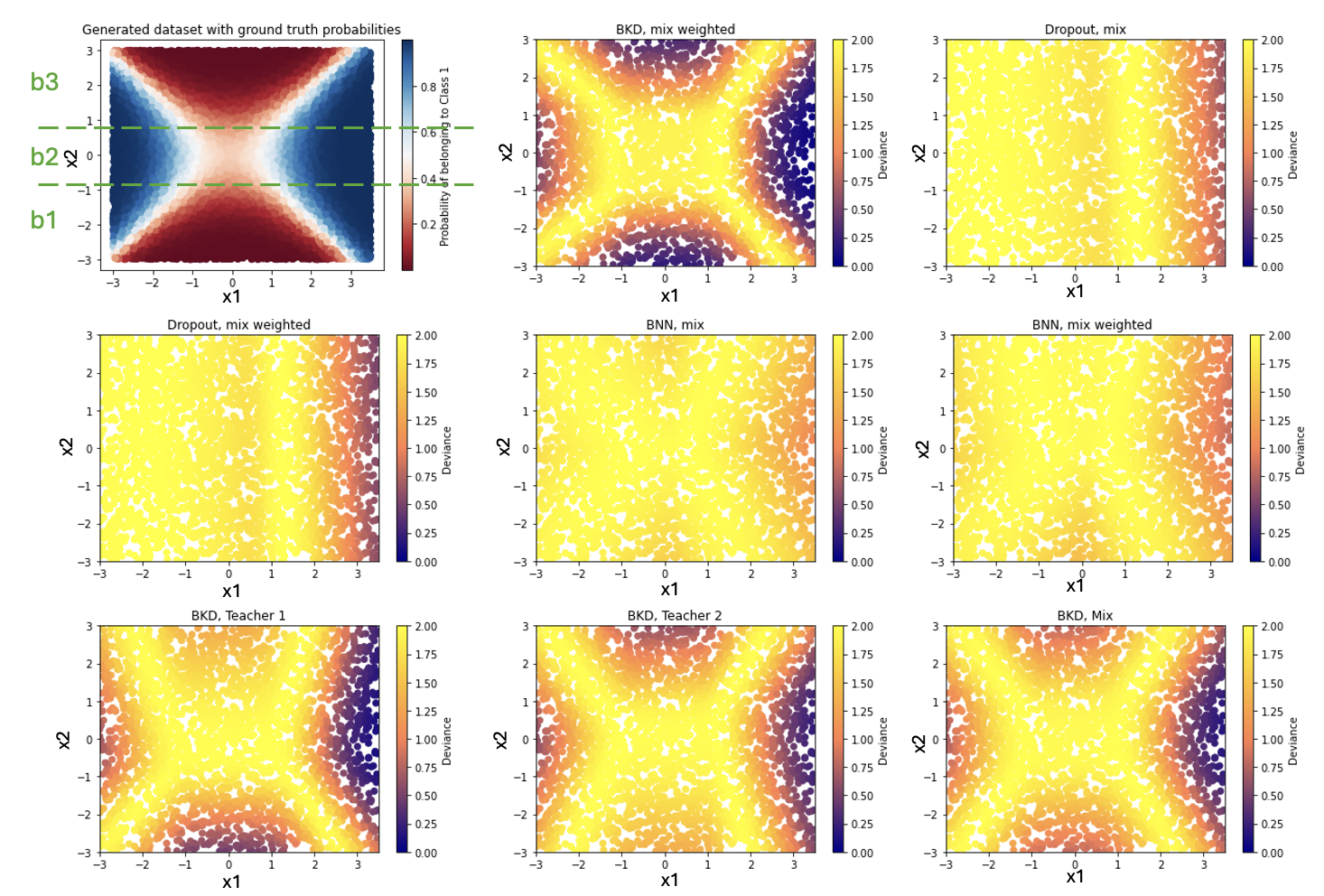}
\caption{Top left panel: Ground truth probability distribution $p(\bx)$ with three domains marked as $b_1$, $b_2$ and $b_3$. Remaining panels: Predicted uncertainty (measured as deviance) for each data point across various methods. } \label{fig:s1_mean_dev}
\end{figure}

To simulate teacher specialization, we divide the dataset into three domains $b_1$, $b_2$, and $b_3$, as shown in the top left panel of Fig. \ref{fig:s1_mean_dev}. We have Teacher 1 primarily trains on domain $b_1$, with limited exposure to the other domains, while Teacher 2 specializes in $b_3$. As shown in Table \ref{tab:acc_s1}, Teacher 1 excels in $b_1$, whereas Teacher 2 achieves high accuracy in $b_3$. We then train a student model on data evenly drawn from all three domains, distilling knowledge from both teachers. Details on model structure are in Appendix \ref{supp:simu}.


\noindent \textbf{Accuracy and MSE.} Table \ref{tab:acc_s1} presents the overall and domain-specific classification accuracy of all methods.
We have the following findings: (1) Despite their smaller model sizes, student models trained with KD methods achieve performance comparable to or even exceeding that of their teacher models. (2) MT-BKD outperforms original KD, Dropout, and BNN methods, achieving the highest overall accuracy (0.83) and excelling particularly in domains $b_1$ (0.87) and $b_3$ (0.85). (3) The proposed weighted distillation strategy consistently surpasses the equal-weight approach, highlighting its effectiveness in enhancing model performance. 
Additionally, MSE results for class probability $p$ demonstrate that the proposed MT-BKD method outperforms all other methods, with the weighted version achieving the lowest MSE.

\begin{table}
\centering
\scalebox{0.85}{
\begin{tabular}{c|cc|cc|cc|cc|cc}
\hline
& \multicolumn{2}{c|}{Teachers} & \multicolumn{2}{c|}{Orig KD} & \multicolumn{2}{c|}{MT-BKD} & \multicolumn{2}{c|}{Dropout} & \multicolumn{2}{c}{BNN} \\
Acc & T1 & T2 & equal & weighted & equal & weighted & equal & weighted & equal & weighted \\
\hline
Overall & 0.66 & 0.67 & 0.72 & 0.72 & 0.79 & \textbf{0.83} & 0.72 & 0.73 & 0.77 & 0.73 \\
$b_1$ & 0.84 & 0.41 & 0.77 & 0.75 & 0.83 & \textbf{0.87} & 0.75 & 0.76 & 0.78 & 0.78 \\
$b_2$ & 0.78 & \textbf{0.79} & 0.66 & 0.69 & 0.78 & 0.78 & 0.67 & 0.68 & 0.80 & 0.80 \\
$b_3$ & 0.39 & 0.83 & 0.74 & 0.74 & 0.79 & \textbf{0.85} & 0.73 & 0.73 & 0.76 & 0.63 \\
\hline
MSE & 0.14 & 0.14 & 0.08 & 0.08 & 0.02 & \textbf{0.01} & 0.08 & 0.08 & 0.07 & 0.07 \\
\hline
\end{tabular} }
\caption{Accuracy and MSE comparison for Simulation 1. We evaluate models on: overall accuracy, accuracy on data from each domain $b_i$ with $i=1,2,3$, and overall MSE for class probability $p$. The best results in each scenario are highlighted in bold.}
\label{tab:acc_s1}
\end{table}


\noindent \textbf{Uncertainty quantification.} 
We evaluate the inference performance of MT-BKD and compare it to Dropout and BNN. For clarity, we also include results from the single-teacher BKD method to assess individual teacher contributions.
Fig. \ref{fig:s1_mean_dev} compares the uncertainty quantification performance of various methods, with the top left subplot displaying the ground truth probability distribution.
In this setting, where $Y\sim Bernoulli(p)$, the variance of each observation is highly correlated with $p(1-p)$, making the class probability $p$ a key determinant of uncertainty.  Predictions for data points with $p$ near 0.5 are inherently more uncertain, while predictions for $p$ close to 0 or 1 are more certain.
Consequently, the white region ($p\approx 0.5$) in the top left subplot should exhibit the highest uncertainty, whereas the dark blue/red regions ($p\approx 0/1$) should show the lowest. Among the methods, weighted MT-BKD (top middle) aligns best with this expected uncertainty pattern.

The bottom-left subplots illustrate BKD's performance with individual teachers. The student model trained solely with Teacher 1 exhibits clear uncertainty contrasts (yellow vs. dark purple) in domain $b_1$ but blurred boundaries in $b_3$. In contrast, the model trained with Teacher 2 shows sharp uncertainty patterns in $b_3$ but weak distinction in $b_1$, reflecting each teacher's domain expertise. The weighted MT-BKD effectively integrates both teachers' strengths, resulting in improved uncertainty estimation across all domains.

\noindent \textbf{Coverage rate.} 
With the credible interval constructed using Equation (\ref{eq:empirical}), 
We evaluate the coverage rate at three standard credible levels (0.85, 0.90, 0.95). 
As shown in Fig. \ref{fig:sim_coverage}, the weighted MT-BKD method achieves strong performance, with empirical coverage closely matching the nominal credible levels, indicating its effectiveness in quantifying uncertainty. A strong competitor is the weighted BNN method, which achieves coverage performance comparable to the weighted MT-BKD method.

\subsection{Simulation 2}

In this simulation, we consider the following multi-class scenario. Let $Y \sim Multinomial(\boldp(\bx))$, and $ \boldp(\bx) = (p_1(\bx), p_2(\bx), p_3(\bx), p_4(\bx), p_5(\bx))^T$, where $ p_i(\bx) = \frac{f(\bx|\boldsymbol{\mu}_i,\boldsymbol{\Sigma})}{\sum_{j=1}^5f(\bx|\boldsymbol{\mu}_1,\boldsymbol{\Sigma})},\ i=1,\ldots,5$.
We have $\boldsymbol{\mu}_1=(0,0,0,0,0)^T$, $\boldsymbol{\mu}_2=(3,3,3,3,3)^T$, $\boldsymbol{\mu}_3=(0,0,1,3,2)^T$, $\boldsymbol{\mu}_4=(2,0,1,2,1)^T$, $\boldsymbol{\mu}_5=(2,2,1,0,1)^T$, and $\Sigma=(0.5^{|i-j|})_{5\times5}$. 
We simulate $8,000$ sample from each $N(\boldsymbol{\mu}_i,\Sigma)$, $i=1,\ldots,5$, and generate labels accordingly.

\begin{table}[t]
\centering
\scalebox{0.85}{
\begin{tabular}{c|ccc|cc|cc|cc|cc}
\hline
& \multicolumn{3}{c|}{Teachers} & \multicolumn{2}{c|}{Orig KD} & \multicolumn{2}{c|}{MT-BKD} & \multicolumn{2}{c|}{Dropout} & \multicolumn{2}{c}{BNN} \\
Acc & T1 & T2 & T3 & equal & weighted & equal & weighted & equal & weighted & equal & weighted \\
\hline
Overall & 0.56 & 0.61 & 0.53 & 0.85 & 0.85 & 0.85 & \textbf{0.87} & 0.86 & 0.86 & 0.84 & 0.85 \\
$b_1$ & \textbf{0.96} & 0.00 & 0.33 & 0.84 & 0.84 & 0.91 & 0.92 & 0.83 & 0.83 & 0.77 & 0.79 \\
$b_2$ & 0.90 & \textbf{0.92} & 0.05 & 0.90 & 0.90 & 0.89 & 0.89 & 0.92 & \textbf{0.92} & 0.91 & 0.90 \\
$b_3$ & 0.35 & 0.91 & \textbf{0.93} & 0.90 & 0.91 & 0.86 & 0.88 & 0.91 & 0.92 & 0.90 & 0.90 \\
$b_4$ & 0.58 & 0.53 & 0.51 & 0.79 & 0.78 & \textbf{0.86} & \textbf{0.86} & 0.81 & 0.80 & 0.81 & 0.82 \\
$b_5$ & 0.00 & 0.67 & 0.83 & \textbf{0.85} & 0.84 & 0.76 & 0.82 & 0.82 & 0.81 & 0.81 & 0.82 \\
\hline
MSE & 0.07 & 0.07 & 0.08 & 0.02 & 0.02 & 0.01 & \textbf{0.00} & 0.02 & 0.01 & 0.02 & 0.01 \\
\hline
\end{tabular} }
\caption{Accuracy and MSE comparison for Simulation 2. We evaluate models by overall accuracy, accuracy on data from each domain $b_i$ with $i=1,\cdots,5$, and overall MSE for class probability $p$. The best results in each scenario are highlighted in bold.}
\label{tab:acc_s2}
\end{table}

We consider three teacher models $T_1$, $T_2$, $T_3$, each specializing in data drawn from $N(\boldsymbol{\mu}_i,\Sigma)$ for $i=1,2,3$. 
Table \ref{tab:acc_s2} shows the overall classification accuracy, MSE, and domain-specific performance on domain $b_i$, which contains data generated from $N(\boldsymbol{\mu}_i,\Sigma)$, $i=1,\cdots,5$. The results align with those from Simulation 1, demonstrating that the proposed MT-BKD method outperforms all other approaches. In particular, the weighted distillation strategy for combining multiple teachers surpasses the equal-weight strategy in accuracy. 
The advantages of the MT-BKD method are further highlighted in its uncertainty quantification capabilities, as shown by the coverage rates in Fig. \ref{fig:sim_coverage} (b). MT-BKD with the weighting strategy closely matches the assigned credible levels, whereas other methods exhibit significantly larger errors, with discrepancies of up to 10\%.

\begin{figure}[h]
\vspace{-0.1cm}
\centering
\includegraphics[width=0.9\linewidth]{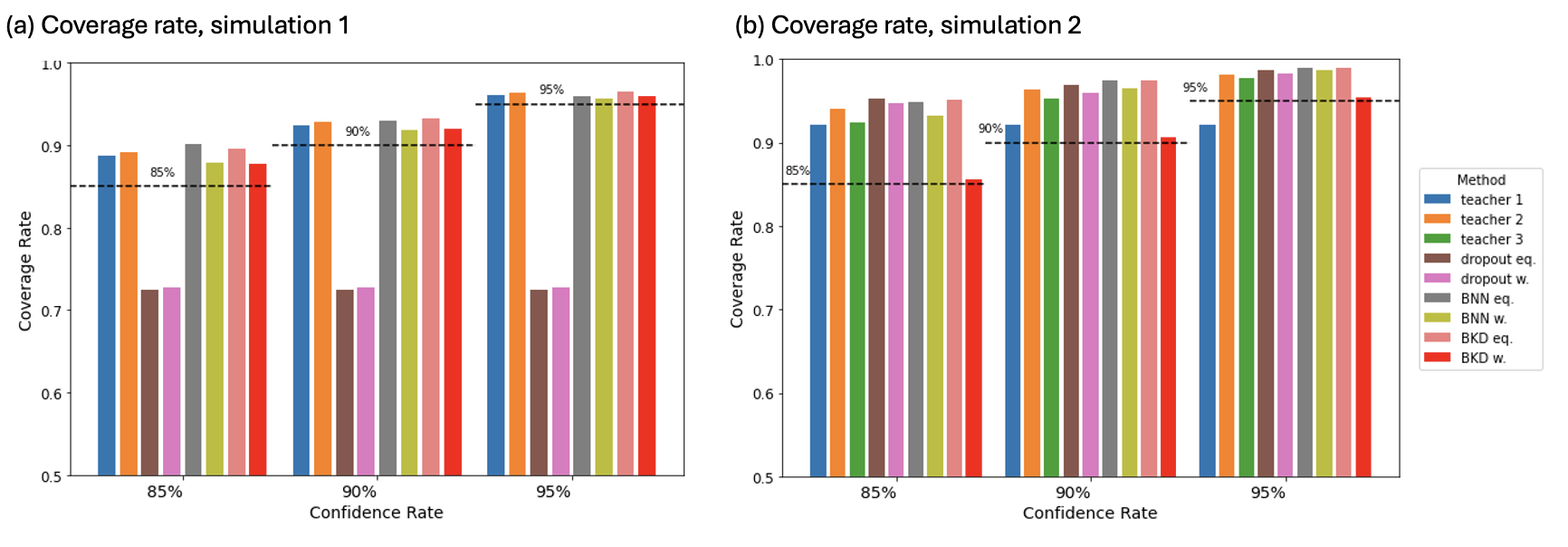}
\vspace{-0.2cm}
\caption{Comparison of coverage rate of (a) simulation 1 and (b) simulation 2 at three standard credible levels (0.85, 0.90, 0.95).  } \label{fig:sim_coverage}
\vspace{-0.2cm}
\end{figure}

\section{Application}\label{sec:real}

We evaluate the performance of MT-BKD on two tasks: eukaryotic protein subcellular localization prediction and image classification. In each task, a compact student model is distilled from teacher models, and we report the student model's uncertainty predictions. Details of the model architecture and training procedures are provided in Appendix \ref{supp:real}.

\subsection{Protein Subcellular location prediction}\label{sec:real_protein}

We apply MT-BKD to predict eukaryotic protein subcellular localization, a key factor in understanding protein function and disease mechanisms. Protein localization provides the necessary context for function, while mislocalization can lead to diseases such as cancer, neurodegenerative, and metabolic disorders \citep{hung2011protein}. Accurate localization prediction is therefore vital for advancing both basic research and therapeutic development.

\begin{figure}[ht!]
\centering
\vspace{2pt}
\includegraphics[width=0.95\linewidth]{figures/data_describe01.png}
\caption{Data description. (a) Ten eukaryotic subcellular compartments for the localization task. (b) Protein-specific predictive uncertainty. Low-uncertainty cases support routine annotation, while high-uncertainty cases suggest ambiguous or multi-compartment behavior and merit follow-up.}\label{fig:data_describe}
\end{figure}



\begin{table}
\centering
\scalebox{0.7}{
\begin{tabular}{lcl}
\hline Location & Number of proteins & \multicolumn{1}{c}{ Sublocations } \\
\hline Nucleus & 1243 & Envelope, inner and outer membrane, matrix, lamina, chromosome, \\
& & nucleus speckle \\
Cytoplasm & 955 & Cytoplasm (cytosol and cytoskeleton) \\
Extracellular & 717 & Extracellular \\
Mitochondrion & 515 & Envelope, inner and outer membrane, matrix, intermembrane space \\
Cell membrane & 619 & Apical, apicolateral, basal, basolateral, lateral, cell membrane, cell projection \\
Endoplasmic reticulum & 334 & ER membrane and lumen, microsome, rough ER, smooth ER, \\
& & Sarcoplasmic reticulum \\
Plastid & 247 & Plastid membrane, stroma and thylakoid \\
Golgi apparatus & 142 & Golgi apparatus membrane and lumen \\
Lysosome/Vacuole & 171 & Contractile, lytic and protein storage vacuole, vacuole lumen and membrane,  \\
& & lysosome lumen and membrane \\
Peroxisome & 57 & Peroxisome matrix and membrane \\
\hline
\end{tabular} }
\caption{Number of proteins in each location and sublocations that were grouped together under the same main location.}
\label{tab:data}
\end{table}

We analyze a subset of the DeepLoc2 dataset \citep{thumuluri2022deeploc}, derived from UniProt (release 2021\_03) \citep{uniprot2018uniprot}, which classifies proteins into ten subcellular locations: Cytoplasm, Nucleus, Extracellular, Cell Membrane, Mitochondrion, Plastid, Endoplasmic Reticulum, Lysosome/Vacuole, Golgi Apparatus, and Peroxisome. We focus on sequences from Metazoa, Fungi, or Viridiplantae that range from 40 to 1000 amino acids, excluding proteins assigned to multiple locations. From this filtered set, we randomly select 5,000 sequences. Table \ref{tab:data} summarizes the subcellular locations and corresponding sequence counts. The data is then split into training, validation, and test sets in a 6:2:2 ratio.
We adopt two teacher models with high-capacity protein sequence representations: ESM-2-650M \citep{lin2023evolutionary} (650 million parameters) and ProtT5-XL-half (120 million parameters), alongside a lightweight student model, ESM-2-8M (8 million parameters). 
A classifier is added after the feature extraction layers of each model to adapt their pre-trained embeddings to subcellular location prediction.

\begin{figure}[h]
\centering
\includegraphics[width=0.85\linewidth]{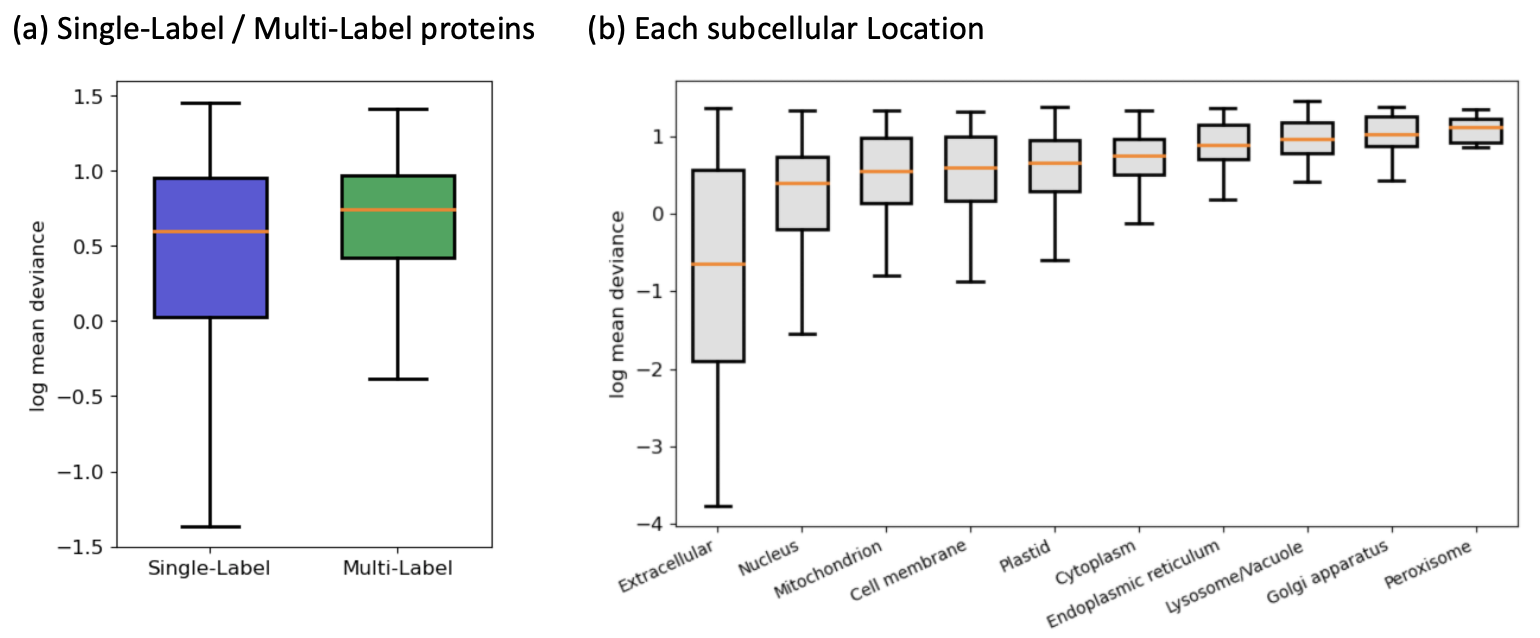}
\vspace{-0.2cm}
\caption{Distribution of log-transformed mean deviance. (a) The first box shows results for the test set, containing only single-label proteins. The second box presents results for a separate dataset of multiple-label proteins. (b) Results for individual protein classes within the single-label test set. } 
\label{fig:protein_box_class}
\vspace{-0.2cm}
\end{figure}

We explore the performance of MT-BKD in uncertainty quantification. 
Fig. \ref{fig:protein_box_class} shows the log-transformed mean deviance distribution. In subfigure (a), the first box shows results from the test set of exclusively single-label proteins, aligning with the training data. The second box displays results from a separate multi-label protein dataset for comparison. As expected, uncertainty is higher for multi-label proteins. This increased uncertainty likely comes from the greater complexity of multiple-label proteins, making them harder to predict. Moreover, since our model was trained only on single-label proteins, reduced performance on multi-label data is unsurprising. The elevated uncertainty appropriately indicates the model’s recognition of its own predictive limitations.
Furthermore, we evaluate the model’s prediction confidence across different classes. In Fig. \ref{fig:protein_box_class} (b), each box corresponds to a specific class in the test set.
MT-BKD exhibits the highest confidence for Extracellular, Nuclear, and Mitochondrial, and the lowest for Golgi apparatus, Lysosomes, and Peroxisomes.
Such observations align with biological evidence that locations like Extracellular, Nuclear, and Mitochondrial are typically easier to identify with high confidence \citep{yogev2011dual, lu2021types, owji2018comprehensive}. This is because of their distinct, well-characterized targeting signals that are consistent across species and highly conserved, i.e., they remain very similar or unchanged across different species over long evolutionary timescales.
In particular, nuclear proteins contain one or more short, positively charged nuclear localization signals (NLS) recognized by importins \citep{dingwall1991nuclear, lu2021types}, while mitochondria-bound proteins typically possess an N-terminal targeting sequence \cite{yogev2011dual}, 
and extracellular proteins are generally marked by a clear N-terminal signal peptide that directs them through the secretory pathway \citep{owji2018comprehensive}.
On the contrary, locations such as the Golgi apparatus, lysosomes, and peroxisomes present challenges due to more diverse and less conserved signals.
Specifically, Golgi proteins often undergo complex modification processes, such as glycosylation, which can vary between species and even cell types \citep{he2024glycosylation}.
Lysosomal targeting relies on tags like mannose-6-phosphate, but not all lysosomal proteins consistently carry this motif \citep{braulke2009sorting}. 
Peroxisomal proteins are similarly difficult to predict due to a varied array of peroxisomal targeting signals (PTS1 and PTS2) with structural nuances \citep{schafer2004functional}.
These biological facts support the biological basis for varying certainty in subcellular localization predictions given by MT-BKD. 

The uncertainty scores achieved by MT-BKD can be leveraged to prioritize experimental validation of borderline predictions, guide model refinement by highlighting where additional data or improved feature engineering is needed, and ultimately enhance our understanding of how subcellular localization signals shape protein distribution.

\subsection{Image Classification}\label{sec:image}

In this study, we utilized the Digit-Five dataset, which contains digit images from multiple domains, each presenting unique characteristics \citep{peng2019moment}. There are 10 classes corresponding to digits ranging from 0 to 9 in each domain. 
Specifically, we focused on three colored variants of the dataset, where each color scheme corresponds to a distinct domain, thereby introducing variations in data distribution and visual representation. These colored variants aim to simulate real-world challenges in domain adaptation by altering pixel intensity patterns, backgrounds, or foreground colors, which can significantly impact model performance.

Each teacher model was trained to specialize in one of the colored datasets, effectively becoming an expert in its respective domain. Using the proposed MT-$\ours$ approach, we trained a single student model to generalize across all three color domains, enabling it to learn from the unique expertise of each teacher. The teacher models use a ResNet architecture with approximately 25.6 million parameters, while the student model employs a convolutional neural network with around 0.1 million parameters, which is 256 times fewer than the teacher models. Additional details about the data and models can be found in Appendix \ref{supp:real}.

\begin{figure}[h]
\vspace{-0.2cm}
\centering
\includegraphics[width=0.6\linewidth]{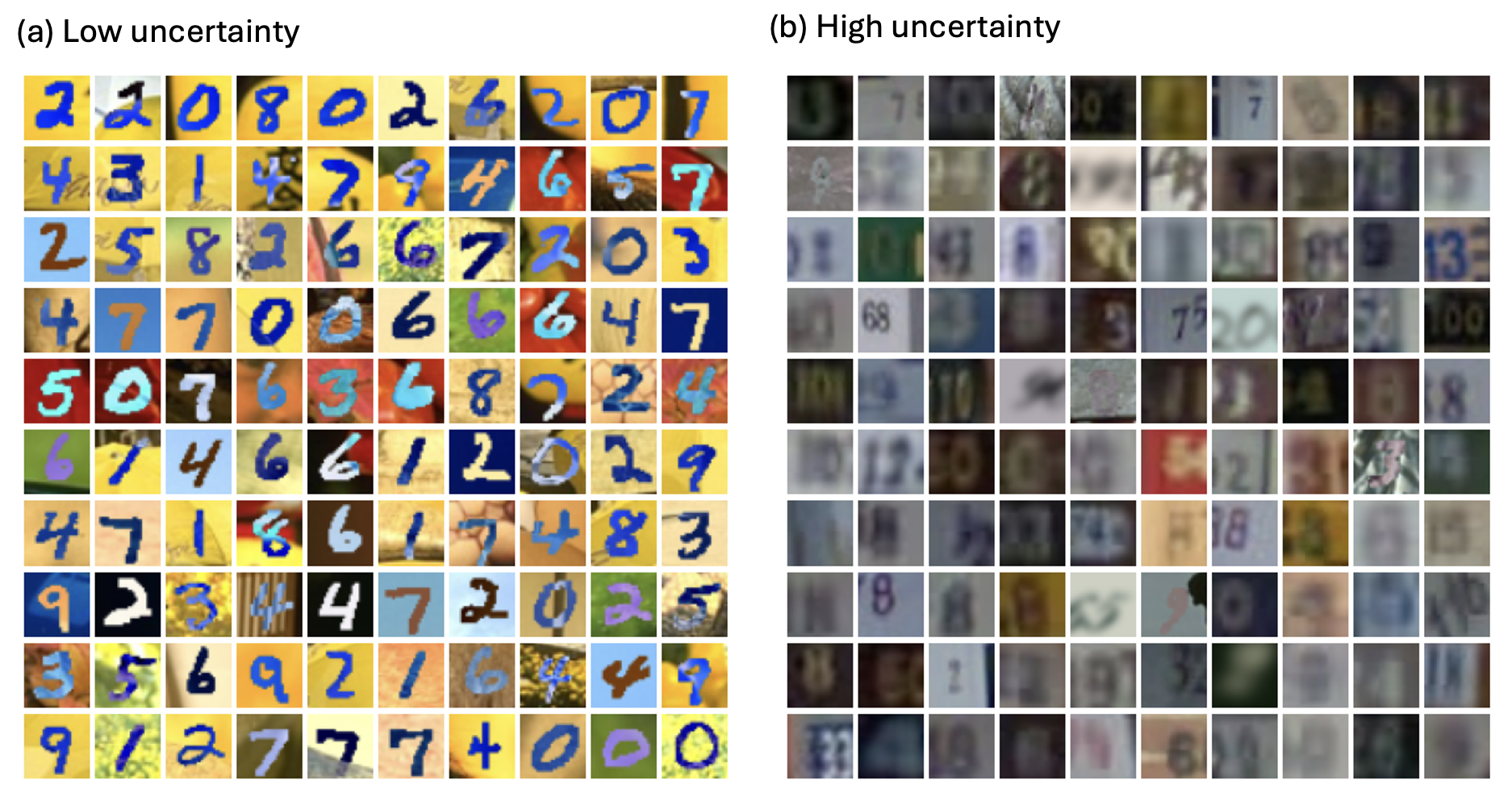}
\vspace{-0.2cm}
\caption{Left panel showcases images with the lowest uncertainty, while the bottom panel showcases images with the highest uncertainty. } \label{fig:digit5}
\end{figure}

In Fig. \ref{fig:digit5}, the images with the lowest and highest uncertainty are shown on the left and right panels, respectively. The images with higher uncertainty are noticeably more difficult to classify. The clear association between high uncertainty and the inherent difficulty in image recognition validates the performance of MT-$\ours$ in uncertainty quantification.

\section{Conclusion}\label{sec:conc}

In this work, we develop a Multiple-teacher Bayesian Knowledge Distillation (MT-$\ours$) framework aimed at compressing the teacher model and providing a suite of Bayesian inference tools for the student model's uncertainty quantification. MT-$\ours$ constructs a teacher-informed prior by integrating external knowledge from multiple teacher models with task-specific training data for the student model’s parameters, from which a posterior is derived. We show that minimizing KD loss is related to estimating the posterior mode in MT-$\ours$, offering a clearer interpretation of how KD operates and suggesting new ways to enhance KD methods. The developed MT-BKD framework contributes to the growing demand for reliable, privacy-aware, and resource-efficient machine learning models in real-world applications.


We evaluate the performance of MT-$\ours$ using both synthetic and real datasets, focusing on classification accuracy and uncertainty quantification ability. The results demonstrate that MT-$\ours$ achieves performance comparable to the teacher model with the advantage of a significantly smaller size.
A key strength of MT-$\ours$ is its ability to precisely quantify prediction uncertainty, as demonstrated by three main findings. First, we confirm that mean deviance is a reliable metric for uncertainty by examining its relationship to the true class probability $p$ and the data’s intrinsic structure (see Fig. \ref{fig:s1_mean_dev}). Second, an analysis of coverage rates highlights the robustness of MT-BKD. Third, in experiments on real datasets, visualizations and distribution analyses of uncertainty estimates across classes further affirm the reliability of MT-BKD’s uncertainty measures.


MT-$\ours$ could be extended to regression and data generation tasks by imposing suitable teacher-informed priors. For regression, one can adopt a Gaussian prior whose mean aligns with the teacher models’ predictions. For score-based generative models, similarly, a Gaussian prior on the score function can be set to have its mean match the teacher-estimated scores.

\putbib[ref]
\newpage
\end{bibunit}
\appendix
\newpage

\begin{bibunit}[apalike]
\input{Supple}
\end{bibunit}

\end{document}

%% file: math_commands.tex
\newcommand{\ours}{\mbox{BKD}}

\renewcommand{\hat}{\widehat}
\newcommand{\bfm}[1]{\ensuremath{\mathbf{#1}}}

\newcommand{\logit}{\text{logit}}

\def \bfy{\mathbf y}

\def \bstheta{\bfsym \theta}

     \def\RR{\mathbb{R}}

\def\bx{\bfm x}

 \def\cD{{\cal  D}}
 \def\cE{{\cal  E}}

 \def\cL{{\cal  L}}

\def\a\cos{\mathrm{arc\cos}}

\newcommand{\bfsym}[1]{\ensuremath{\boldsymbol{#1}}}

\def\bfh{\boldsymbol h} 
\def\boldq{\boldsymbol q}
\def\boldp{\boldsymbol p}

\def\RR{\mathbb R}

\def\bstheta{\bfsym {\theta}}

\def \CE {\text{CE}}

\def\bstheta{\bfsym \theta}

%% file: Supple.tex

\begin{center}
  {\Large\bf APPENDIX}  
\end{center}

\setcounter{page}{1} 
\numberwithin{equation}{section}

\noindent Here is an outline of the Appendix.

\begin{itemize}
    \item Appendix \ref{supp:notation} summarizes the notations used in the main text.
    \item Appendix \ref{supp:add} discusses the single teacher situation.
    \item Appendix \ref{supp:proof} gives the proof of theoretical results.
    \item Appendix \ref{supp:simu} presents supplementary details for simulation experiments.
    \item Appendix \ref{supp:real} presents supplementary details for real data analysis.
\end{itemize}

\newpage
\section{Notation}\label{supp:notation}

In Table \ref{tab: notation}, we summarize the notations in the main text.

{
\begin{table}[H]
\centering
\begin{tabular}{c|l}
\hline
Notation & Interpretation \\ \hline
$\mathcal{F}$ & function classes\\
$\mathbb{P}$ & population distribution \\
$\mathcal{D}, \mathcal{T}$ & dataset \\
$\mathcal{P}$ & distribution class\\
$\bfh$ & function representing the student neural network \\
$\bstheta$ & parameters of the student neural network \\
$\bstheta^*$ & estimate of $\bstheta$ \\
$\hat\bstheta$ & Monte Carlo sample of $\bstheta$ \\
$\bx$ & vector representing $m$-dimensional feature \\
$y$ & class label \\
$\bfy$ & one-hot encoding of class label $y$ \\
$\boldp$ & output of the tea neural network, representing the predicted class probability \\
$\boldq$ & output of the student neural network, representing the predicted class probability \\
$N$ & sample size of the training dataset $\mathcal{D} $ \\
$n$ & sample size of the testing dataset $\mathcal{T} $ \\
$l$ & likelihood function \\
$\nabla$ & gradient operator \\
$\lambda$ & weight parameter \\
$\operatorname{dev}$ & deviance \\
\hline
\end{tabular}
\caption{Notation table}\label{tab: notation}
\end{table}
}

\section{Additioanl Discussion}\label{supp:add}

\subsection{Single Teacher Case}\label{supp:add_single}

In the case where $G=1$, meaning there is only a single teacher model, we present the following theorem, which establishes the connection between our established Bayesian posterior and the original KD framework. 
\begin{theorem}\label{thm:equiv}
    Assume the prior distributions are defined in Equation (\ref{eq:prior_KD_theta}) and (\ref{eq:prior_KD_q}). When there is only one teacher model, represented by $G = 1$, the posterior mode of $\bstheta$ is the minimizer of KD, i.e., $\bstheta_{KD}^*$.
\end{theorem}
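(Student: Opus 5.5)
The plan is to write down the log-posterior explicitly for $G=1$ and check that it is an affine, decreasing function of the KD objective in Equation (\ref{eq:opt_KD_loss}). Maximizing the posterior density is then the same as minimizing the KD loss.

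With $G=1$ the mixture weights satisfy $w_i^{(1)}=1$ for every $i$. This holds under Equation (\ref{eq:weight}), and in fact for any normalized weights. So the prior in Equation (\ref{eq:prior_KD_theta}) becomes a product of Dirichlet densities from Equation (\ref{eq:prior_KD_q}). Substituting into the posterior (\ref{eq:post}), the logarithm of the sum collapses to a single term, and we get
\begin{align*}
\log \pi(\bstheta \mid \mathcal{D}, \boldp, \lambda, \bfh)
&= \sum_{i=1}^N \sum_{k=1}^K y_{ik}\log h_k(\bx_i;\bstheta) + \lambda \sum_{i=1}^N\sum_{k=1}^K p_{ik}\log h_k(\bx_i;\bstheta) + C \\
&= -N\Big[\mathcal{L}(\bstheta\mid\mathcal{D}) + \lambda\tilde{\mathcal{L}}(\bstheta\mid\mathcal{D},\boldp)\Big] + C \\
&= -N\,\mathcal{L}^{\mathrm{KD}}(\bstheta\mid\mathcal{D},\boldp,\lambda) + C .
\end{align*}
The constant $C$ collects $-\sum_i \log B(\mathbf{1}_K+\lambda\boldp_i)$ and the normalizing constant of the posterior, and it does not depend on $\bstheta$. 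The second equality uses the definitions (\ref{eq:loss}) and (\ref{eq:student_teach_discrepancy}).

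Since $N>0$ and $\log$ is strictly increasing, every maximizer of the posterior density over the parameter space is a minimizer of $\mathcal{L}^{\mathrm{KD}}$, and conversely. The posterior mode therefore coincides with $\bstheta^*_{KD}$, and if either one is a set of maximizers/minimizers, the two sets are equal.

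The only point needing care is that these objects are well defined. By Proposition \ref{lemma:proper}, under the compactness assumption the prior is proper, so the posterior is a genuine density. Continuity of $\bfh$ in $\bstheta$, which holds for the networks in (\ref{dnn_def}), together with compactness then guarantees that the mode and the KD minimizer exist. I also have to handle parameters where some $h_k(\bx_i;\bstheta)=0$. At such points both sides equal $-\infty$ and $+\infty$ respectively, consistently, so they are never optimizers. This is the main, though mild, obstacle; the rest is direct algebra.
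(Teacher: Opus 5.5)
Your proposal is correct and takes the same route as the paper. For $G=1$ the log of the single-component mixture is $\lambda\sum_{i}\sum_{k} p_{ik}\log h_k$ plus a $\boldsymbol{\theta}$-free constant, so the log-posterior is a decreasing affine function of the KD objective, and the posterior mode coincides with $\boldsymbol{\theta}_{KD}^*$. Your bookkeeping is slightly more careful than the paper's: the paper writes the negative log-posterior as $\mathcal{L}^{\mathrm{KD}}+c$ and drops the factor $N$ that comes from the $1/N$ normalization in Equations (\ref{eq:loss}) and (\ref{eq:student_teach_discrepancy}), which is harmless for the optimizer; your remarks on properness, existence under compactness, and zero probabilities are also extra care the paper omits.
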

\begin{proof}
The negative log-transformed posterior distribution of $\bstheta$ is
\begin{small}
\begin{align}\label{eq:ll_q}
        -l(\bstheta \mid\mathcal{D}, \boldp, \lambda,  \bfh(\cdot;\cdot) ) 
         =&  -\sum_{i=1}^N  \sum_{k=1}^K y_{ik}\log\left(h_k(\bx_i;\bstheta)\right)  \nonumber\\
        &- \lambda \sum_{i=1}^N \sum_{k=1}^K  p_{ik}\log \left(h_k(\bx_i;\bstheta)\right)  + c,\nonumber\\
         =&\  \mathcal{L}^{\mathrm{KD}}( \bfh(\cdot , \bstheta)\mid \mathcal{D}, \boldp,  \lambda) +c,
\end{align} 
\end{small}
where $c$ is a constant. It is easy to verify that the posterior mode of $\bstheta$ is the minimizer of the KD loss in Equation (\ref{eq:opt_KD_loss}).
\end{proof}

In other words, optimizing the KD loss to search for parameters $\bstheta$ in the student model is tantamount to achieving the maximum likelihood estimation of the parameters $\bstheta$.

\section{Proof of Theoretical Results}\label{supp:proof}

\noindent\textbf{Proposition \ref{lemma:proper}.}
    \textit{Consider the probability density function $f(\boldq\mid\boldp_{i}^{(g)})$  as defined in Equation (\ref{eq:prior_KD_q}) with a constant $\lambda > 0$. 
    Assuming that the parameters of the student model lie in a compact space, then $\pi(\bstheta\mid\{\boldp_{i}^{(g)}\}_{i=1,\cdots,N}^{g=1,\cdots,G}) $ is a proper prior.
    }

\begin{proof}
According to Equation (\ref{eq:prior_KD_theta}) and (\ref{eq:prior_KD_q}), for some constant $C>0$, we have 
\begin{align}
    \pi(\bstheta\mid\{\boldp_{i}^{(g)}\}_{i=1,\cdots,N}^{g=1,\cdots,G}) 
    &= C\ \prod_{i=1}^N \sum_{g=1}^G f( \bfh(\mathbf{x}_i; \bstheta)\mid\boldp_{i}^{(g)}) \nonumber \\
    &= C\ \prod_{k=1}^K \sum_{g=1}^G b_{ij} \prod_{i=1}^N \left(\bfh(\mathbf{x}_i; \bstheta) \right)
    ^{\lambda p_{i k}^{(g)}}, 
\end{align}
where $b_{ig} \triangleq \frac{1}{B(\mathbf{1}_K+\lambda \boldp_{i}^{(g)})}>0$ is a constant.
Taking integration, we get
\begin{equation}\label{eq:intg}
    \int  \pi(\bstheta\mid\{\boldp_{i}^{(g)}\}_{i=1,\cdots,N}^{g=1,\cdots,G})  d\bstheta 
    =  C\int \prod_{k=1}^K \sum_{g=1}^G b_{ig} \prod_{i=1}^N \left(\bfh(\mathbf{x}_i; \bstheta) \right)^{\lambda p_{i k}^{(g)}} d\bstheta .
\end{equation}
Since $0 \leq \bfh(\mathbf{x}_i; \bstheta) \leq 1$, we have $0 \leq  \prod_{k=1}^K \sum_{g=1}^G b_{ig} \prod_{i=1}^N\left(\bfh(\mathbf{x}_i; \bstheta) \right)^{\lambda p_{i k}^{(g)}}  \leq 1$ for $1\leq k \leq K$, $1\leq g \leq G$ and $1\leq i\leq N$. 
Given that $\bstheta$ lies in a compact space and $ \prod_{k=1}^K \sum_{g=1}^G b_{ig} \prod_{i=1}^N\left(\bfh(\mathbf{x}_i; \bstheta) \right)^{\lambda p_{i k}^{(g)}} $ is bounded, we have 
\begin{equation}
    \int \prod_{k=1}^K \sum_{g=1}^G b_{ig} \prod_{i=1}^N \left(\bfh(\mathbf{x}_i; \bstheta) \right)^{\lambda p_{i k}^{(g)}} d\bstheta < \infty.
\end{equation}
Thus, we can easily verify that $\pi(\bstheta\mid\{\boldp_{i}^{(g)}\}_{i=1,\cdots,N}^{g=1,\cdots,G})$ is a proper prior.
\end{proof}


\noindent\textbf{Proposition \ref{thm:converge}.}
    \textit{Consider the probability density function $f(\boldq\mid\boldp_{i}^{(g)})$ as defined in Equation (\ref{eq:prior_KD_q}), as $\lambda \to \infty$, we have
    \begin{equation}
        f(\boldq \mid\boldp_{i}^{(g)}) \longrightarrow \delta(\boldq-\boldp_{i}^{(g)}) ,
    \end{equation}
    where $\delta(\cdot)$ is the multivariate Dirac delta function.}

\begin{proof}
For the sake of brevity, we consider the general formula $f(\boldq\mid\boldp,\lambda)$, where $\boldq=(q_1,q_2,\cdots,p_K)^T$ and $\boldp=(p_1,p_2,\cdots,p_K)^T$. WLOG, we consider the situation where $p_k \neq 0$ for $k \in 1,\cdots,K$. We have 
\begin{equation}
    f(\boldq\mid\boldp,\lambda) 
    = \frac{1}{B(\mathbf{1}_K+\lambda \boldp)}\Pi_{k=1}^K (q_{k})^{\lambda p_{k}} 
    = \frac{\Gamma(\lambda + K)}{\prod_{k=1}^K\left(\Gamma(\lambda p_k+1)\right) }\Pi_{k=1}^K (q_{k})^{\lambda p_{k}}.
\end{equation}

\noindent We start with the situation when $\lambda p_k, k=1, \cdots, K$, are all integers. We have
\begin{equation}\label{suppeq:pdf}
    f(\boldq\mid\boldp,\lambda)=\frac{(\lambda+K-1) !}{\prod_{k=1}^K\left(\lambda p_k\right) !} \cdot \prod_{k=1}^K\left(q_k\right)^{\lambda p_k}.
\end{equation}
According to the Stirling formula that
$ n ! \sim \sqrt{2 \pi n}\left(\frac{n}{e}\right)^n $,
we have 
\begin{equation}
 \begin{aligned}
    f(\boldq\mid\boldp,\lambda) &\sim  \frac{\lambda^{K-1} \cdot \sqrt{\lambda}\left(\frac{\lambda}{e}\right)^\lambda}{\prod_{k=1}^K \sqrt{\lambda p_k}\left(\frac{\lambda p_k}{e}\right)^{\lambda p_k}} \cdot \prod_{k=1}^K\left(q_k\right)^{\lambda p_k} \\ 
    &\sim \lambda^{\frac{K-1}{2}} \cdot \prod_{k=1}^K\left(\frac{q_k}{p_k}\right)^{\lambda p_k}.
\end{aligned}   
\end{equation}
Write $g(\boldq)=\prod_{k=1}^K\left(\frac{q_k}{p_k}\right)^{p_k}$, we now get the maximization of $g(\boldq)$ w.r.t. $\boldq$. Since we have the constraint that $\sum_{k=1}^K q_k=1$, we use the Lagrange multiplier to find the maximum of $g(\boldq)$. 
The Lagrangian function is defined as 
\begin{equation}
    \mathcal{L}\left(q_1, \ldots, q_K, c\right) \equiv \log g+c\left(1-\sum_{k=1}^K q_k\right) ,
\end{equation}
where $\log g=\sum_{k=1}^K p_k\left(\log q_k-\log p_k\right)$.

\noindent To solve 
\begin{equation}
    \nabla_{q_1, \cdots, q_K, c} g\left(q_1, \cdots, q_K, c\right)=0,
\end{equation}
we have
\begin{equation}
    \left\{\begin{array}{l}\frac{p_k}{q_k}-c=0, \qquad \text{for } k=1, \ldots, K \\ \sum_{k=1}^K q_k=1\end{array}\right.
\end{equation}
since $\sum_{k=1}^K p_k=1$, we can easily get that the solution is
\begin{equation}\label{suppeq:sol}
\begin{aligned}
    & \left\{\begin{array}{l}
    q_k=p_k, \qquad \text{for } k=1, \cdots, K \\
    c=1.
    \end{array}\right. 
\end{aligned}
\end{equation}
That is, we have $g_{\text{max}}(\boldq)=1$ when Equation (\ref{suppeq:sol}) holds, and this leads to $f(\boldq\mid\boldp,\lambda) \sim \lambda^{\frac{K-1}{2}} \cdot 1^{\lambda} \stackrel{\lambda \rightarrow \infty}{\longrightarrow} \infty$.

\noindent For $\forall \boldq \neq \boldp$, we have $g(\boldq) < 1$, which leads to  $f(\boldq\mid\boldp,\lambda) \sim \lambda^{\frac{K-1}{2}} \cdot \left(g(\boldq)\right)^{\lambda} \stackrel{\lambda \rightarrow \infty}{\longrightarrow} 0$.
Thus, as $\lambda \rightarrow \infty$, we have
\begin{equation}
    \begin{aligned}
        & \lim _{\lambda \rightarrow \infty} \int f(\boldq\mid\boldp,\lambda) \mathrm{d} \boldq = 1 , \\
        & f(\boldq\mid\boldp,\lambda) = 0 \text{ if } \boldq-\boldp\neq 0 .
    \end{aligned}
\end{equation}
That is, $f(\boldq \mid\boldp,\lambda) \longrightarrow \delta(\boldq-\boldp)$ as $\lambda \rightarrow \infty$.

If $\exists \ k$ s.t. $\lambda p_k$ is not an integer.
According to the Stirling formula for the gamma function that $\Gamma(z) \sim \sqrt{\frac{2 \pi}{z}}\left(\frac{z}{e}\right)^z$, following the same analysis, we can also get $f(\boldq \mid\boldp,\lambda) \longrightarrow \delta(\boldq-\boldp)$ as $\lambda \rightarrow \infty$.

\end{proof}

\noindent\textbf{Theorem \ref{thm:equiv_multi}.}
    Given the prior distribution defined in Equation (\ref{eq:prior_KD_theta}) and (\ref{eq:prior_KD_q}), the traditional multi-teacher KD strategy in Equation (\ref{eq:opt_multi_KD_loss}) is equivalent to deriving the posterior distributions for individual teacher models separately and combining them using logarithmic opinion pooling.

\begin{proof}
Based on a single $g^{\text{th}}$ teacher model, we have the prior distribution $\pi_{(g)}(\bstheta; \{\boldp_i^{(g)}\}_{i=1,\cdots,N}) \propto \prod_{i=1}^N f(\bfh(\bx_i; \bstheta); \boldp_i^{(g)})$, where $g = 1, \dots, G$. The posterior density is given by
\begin{small}
\begin{align}
    \pi_{(g)}(\bstheta \mid\mathcal{D}, \boldp^{(g)}, \lambda,  \bfh(\cdot,\cdot) ) 
    & \propto \prod_{i=1}^N \prod_{k=1}^K h_k\left(\bx_i; \bstheta\right)^{y_{i k}} \cdot
    \prod_{i=1}^N \frac{1}{B\left(\mathbf{1}_k + \lambda\boldp_i^{(q)}\right)} \prod_{k=1}^K\left(h_k\left(\bx_i; \bstheta\right)\right)^{\lambda p_{i k}{ }^{(g)}}.
\end{align}
\end{small}
Combining the posterior distributions based on each teacher model through a logarithmic pooling strategy, we have the combined posterior
\begin{small}
    \begin{align}
        \pi_{\text{pooled}}(\bstheta \mid\mathcal{D}, \boldp, \lambda,  \bfh(\cdot,\cdot)) \propto \Pi_{g=1}^G \pi_{(g)}(\bstheta \mid\mathcal{D}, \boldp^{(g)}, \lambda,  \bfh(\cdot,\cdot))^{w^{(g)}},
    \end{align}
\end{small}
where $w^{(g)}$ is the weight for the $g^{\text{th}}$ teacher model. Thus, the negative log-transformed posterior distribution of $\bstheta$ is
\begin{small}
\begin{align}
        -l(\bstheta \mid\mathcal{D}, \boldp, \lambda,  \bfh(\cdot,\cdot) ) 
         =&  -\sum_{i=1}^N  \sum_{k=1}^K y_{ik}\log\left(h_k(\bx_i;\bstheta)\right)  \nonumber\\
        &- \lambda \sum_{g=1}^G w^{(g)} \sum_{i=1}^N \sum_{k=1}^K  p_{ik}^{(g)}\log \left(h_k(\bx_i;\bstheta)\right)  + c,\nonumber\\
         =&\  \mathcal{L}^{\mathrm{Multi-KD}}( \bfh(\cdot , \bstheta)\mid \mathcal{D}, \boldp,  \lambda) +c,
\end{align} 
\end{small}
where $c$ is a constant. That is, finding the mode of the combined posterior, obtained by merging the posterior distributions from each teacher model using a logarithmic pooling strategy, is equivalent to minimizing the multi-teacher KD loss described in Equation (\ref{eq:opt_multi_KD_loss}).
\end{proof}

\noindent\textbf{Theorem \ref{thm:asymp}.}
The estimate of $\bstheta$ by maximizing Equation (\ref{eq:opt_multi_KD_loss}) yields a consistent estimate of $\bfh(\bx;\bstheta)$ with the optimal convergence rate upper bounded by  
\begin{equation}
    C_2(n) \wedge   G\cdot V \left( \left\{\sum_{g=1}^G \left[\frac{\mu^{(g)}(\bx)}{\mu(\bx) \varepsilon^{(g)} } \right]^{\frac{1}{2}} \right\}^{-1}\right)  ,
\end{equation}
where $C_2(n) \asymp C_1(n)$. 

(A1) For any $\bx \in \Omega_\bx$,  $\mu(\bx)>0$, $\max_{ 1\leq g \leq G} \mu^{(g)}(\bx) >0$. 

(A2) The minimizer $\hat{\bstheta}_n$ in Equation (\ref{eq:loss}) yield a consistent estimate of $\bfh(\bx;\bstheta)$, i.e., the error $\int_{\Omega_\bx} [\bfh(\bx;\hat{\bstheta}_n) - \bfh(\bx;\bstheta) ]^2 \mu(\bx) d\bx  = O_p(C_1(n)) $ and $\lim_{n \to \infty } C_1(n) = 0$.  

(A3) There exist a  positive constant $\varepsilon^{(g)}$ such that $\int_{\Omega_\bx} [M_t^{(g)}(\bx) - \bfh(\bx;\bstheta) ]^2 \mu^{(g)}(\bx) d\bx \leq \varepsilon^{(g)} $ for $g = 1,\ldots,G$.

\begin{proof}
    Let $\hat{\bstheta}_n$ be the maximize of Equation (\ref{eq:opt_multi_KD_loss}).
We define
\begin{eqnarray*}
    \cE_n^{\varphi}(\bstheta)=\frac{1}{N} \sum_{i=1}^N \varphi\left(\bfy_i^T \logit [\bfh(\bx_i;\bstheta)]\right)
\end{eqnarray*}
Notice that $\cE_n^{\varphi}(\bstheta) = \cL(\bstheta|\cD)$ and 
\begin{eqnarray*}
    \cE^{\varphi}(\bstheta) &=& E_\bx(KL[\bfh(\bx;\bstheta_0) ,\bfh(\bx;\bstheta) ])\\
    &=&  E_{\bx,\bfy}(CE[\bfy ,\bfh(\bx;\bstheta) ]) - E_{\bx}(CE[\bfh(\bx_0;\bstheta),\bfh(\bx;\bstheta_0)])
\end{eqnarray*}

For $g$th teacher
    \begin{eqnarray*}
       CE[\bfh(\bx;\bstheta_0),\bfh(\bx;\bstheta)]) &\leq& CE[M_t^{(g)}(\bx),\bfh(\bx;\bstheta)] + \|M_t^{(g)}(\bx) - \bfh(\bx;\bstheta_0) \|_2\|\log \bfh(\bx;\bstheta)  \|_2  \\
        &\leq& CE[M_t^{(g)}(\bx),\bfh(\bx;\bstheta)] + D_1  \|M_t^{(g)}(\bx) - \bfh(\bx;\bstheta_0) \|_2
    \end{eqnarray*}

Without loss of generality, we consider the function 
\begin{eqnarray*}
    J(\bstheta;\bx) = \log \left[\sum_{g=1}^G \tilde{w}^{(g)}(\bx) \exp(\lambda CE[M^{(g)}_t(\bx),\bfh(\bx;\theta) ]  )\right],
\end{eqnarray*}
where $\tilde{w}^{(g)}(\bx)= w^{(g)}(\bx)/{B\left(\mathbf{1}_k+\lambda M^{(g)}_t(\bx)\right)}$.

suppose we set $\tilde{w}^{(g)}(\bx) \propto 1/\exp(\lambda CE[M^{(g)}_t(\bx),\bfh(\bx;\theta_0) ]  )$.Let $W(\bx) = \sum_{g=1}^G \tilde{w}^{(g)}$. 
Since  $\sum_{g=1}^G w^{(g)} = 1$ and $B\left(\mathbf{1}_k+\lambda M^{(g)}_t(\bx)\right)$ is bounded, we may assume $\frac{1}{W(\bx)} < D_2$ for a constant $D_2>0$.
By Jensen's inequality, we have
\begin{eqnarray*}
    J(\bstheta_0;\bx) &=& -\log\left[\frac{W(\bx)}{G} \sum_{g=1}^G \exp(-\lambda CE[M^{(g)}_t(\bx),\bfh(\bx;\theta_0) ]  ) \right]\\
    &\leq& \lambda \frac{W(\bx)}{G} \sum_{g=1}^G CE[M^{(g)}_t(\bx),\bfh(\bx;\theta_0) ]
\end{eqnarray*}

By Hölder's inequality, we have
\begin{eqnarray*}
   &&  \sum_{g=1}^G \tilde{w}^{(g)}(\bx) \exp(CE[\bfh(\bx;\bstheta_0),\bfh(\bx;\bstheta)]) \\
    &=&  \sum_{g=1}^G \tilde{w}^{(g)}(\bx) \exp(CE[M^{(g)}_t(\bx),\bfh(\bx;\bstheta)]) \exp( [M^{(g)}_t(\bx)-\bfh(\bx;\bstheta_0)]^T\log \bfh(\bx;\bstheta) ) \\ 
    &\leq & \left[\sum_{g=1}^G \tilde{w}^{(g)}(\bx) \exp(\lambda CE[M^{(g)}_t(\bx),\bfh(\bx;\bstheta)]) \right]^{\frac{1}{\lambda}} \left[\sum_{g=1}^G \tilde{w}^{(g)}(\bx) \exp( \lambda[M^{(g)}_t(\bx)-\bfh(\bx;\bstheta_0)]^T\log \bfh(\bx;\bstheta) ) \right]^{\frac{1}{\lambda}}  \\
    &\leq & \exp{J(\bstheta;\bx)} \left[\sum_{g=1}^G \tilde{w}^{(g)}(\bx) \exp( [M^{(g)}_t(\bx)-\bfh(\bx;\bstheta_0)]^T\log \bfh(\bx;\bstheta) ) \right]  \\
    &\leq& \exp{J(\bstheta;\bx)} \exp(T_1(\bx,\bstheta)) \\
 \end{eqnarray*}

\begin{eqnarray*}
     \cE^{\varphi}(\bstheta) &\leq&  E_\bx \left[\frac{1}{W(\bx)}\sum_{g=1}^G \tilde{w}^{(g)}(\bx) CE[M^{(g)}_t(\bx),\bfh(\bx;\theta) ] +   \sum_{g=1}^G \frac{\tilde{w}^{(g)}(\bx)}{W(\bx)} \|M_t^{(g)}(\bx) - \bfh(\bx;\bstheta_0) \|_2  \right] \\
     &\leq& D_2 E_\bx[J(\bstheta;\bx)]+ \sum_{g=1}^G \int\frac{\tilde{w}^{(g)}(\bx)}{W(\bx)} \|M_t^{(g)}(\bx) - \bfh(\bx;\bstheta_0) \|_2 \mu(\bx)d\bx  \\
     &\leq& D_2 E_\bx[J(\bstheta;\bx)]+    \sum_{g=1}^G 
 \sqrt{\int  \frac{[\tilde{w}^{(g)}(\bx)\mu(\bx)/W(\bx)]^2}{\mu^{(g)}(\bx)}  d\bx  \int  \|M_t^{(g)}(\bx) - \bfh(\bx;\bstheta_0) \|_2^2  \mu^{(g)}(\bx)d\bx  } \\
 &\leq&  D_2 E_\bx[J(\bstheta;\bx)]+   \sum_{g=1}^G 
 \sqrt{\varepsilon^{(g)} \int  \frac{[\tilde{w}^{(g)}(\bx)\mu(\bx)/W(\bx)]^2}{\mu^{(g)}(\bx)}  d\bx    },
\end{eqnarray*}

We have 
\begin{eqnarray*}
   && \log \left[\sum_{g=1}^G \tilde{w}^{(g)}(\bx) \exp( [M^{(g)}_t(\bx)-\bfh(\bx;\bstheta_0)]^T\log \bfh(\bx;\bstheta) ) \right] \\
   &\leq&  \log \left[\sum_{g=1}^G \tilde{w}^{(g)}(\bx) \exp( \|M^{(g)}_t(\bx)-\bfh(\bx;\bstheta_0)\|_2^2/2 + \|\log \bfh(\bx;\bstheta) \|_2^2/2 )\right] \\
   &\leq& D_1  \log \left[\sum_{g=1}^G \tilde{w}^{(g)}(\bx) \exp( \|M^{(g)}_t(\bx)-\bfh(\bx;\bstheta_0)\|_2^2/2 )\right] 
\end{eqnarray*}

By Jensen's inequality, we have
\begin{eqnarray*}
J(\bstheta;\bx)
   &\geq& \sum_{g=1}^G \tilde{w}^{(g)}(\bx) CE[M^{(g)}_t(\bx),\bfh(\bx;\theta) ].
\end{eqnarray*}

Let $W(\bx) = \sum_{g=1}^G \tilde{w}^{(g)}$. 
Since  $\sum_{g=1}^G w^{(g)} = 1$ and $B\left(\mathbf{1}_k+\lambda M^{(g)}_t(\bx)\right)$ is bounded, we may assume $\frac{1}{W(\bx)} < D_2$ for a constant $D_2>0$.
We have

\begin{eqnarray*}
     \cE^{\varphi}(\bstheta) &\leq&  E_\bx \left[\frac{1}{W(\bx)}\sum_{g=1}^G \tilde{w}^{(g)}(\bx) CE[M^{(g)}_t(\bx),\bfh(\bx;\theta) ] +   \sum_{g=1}^G \frac{\tilde{w}^{(g)}(\bx)}{W(\bx)} \|M_t^{(g)}(\bx) - \bfh(\bx;\bstheta_0) \|_2  \right] \\
     &\leq& D_2 E_\bx[J(\bstheta;\bx)]+ \sum_{g=1}^G \int\frac{\tilde{w}^{(g)}(\bx)}{W(\bx)} \|M_t^{(g)}(\bx) - \bfh(\bx;\bstheta_0) \|_2 \mu(\bx)d\bx  \\
     &\leq& D_2 E_\bx[J(\bstheta;\bx)]+    \sum_{g=1}^G 
 \sqrt{\int  \frac{[\tilde{w}^{(g)}(\bx)\mu(\bx)/W(\bx)]^2}{\mu^{(g)}(\bx)}  d\bx  \int  \|M_t^{(g)}(\bx) - \bfh(\bx;\bstheta_0) \|_2^2  \mu^{(g)}(\bx)d\bx  } \\
 &\leq&  D_2 E_\bx[J(\bstheta;\bx)]+   \sum_{g=1}^G 
 \sqrt{\varepsilon^{(g)} \int  \frac{[\tilde{w}^{(g)}(\bx)\mu(\bx)/W(\bx)]^2}{\mu^{(g)}(\bx)}  d\bx    },
\end{eqnarray*}
Notice that if $\bstheta = \bstheta_0$, we have $\cE^{\varphi}(\bstheta_0) \leq  D_2 E_\bx[J(\bstheta_0;\bx)]$.

Suppose we set $\tilde{w}^{(g)}(\bx) \propto \sqrt{\mu^{(g)}(\bx)/\varepsilon^{(g)}}$, we have
\begin{eqnarray*}
    \cE^{\varphi}(\bstheta) &\leq&  D_2 E_\bx[J(\bstheta;\bx)]+  G\cdot V(   \frac{ \sqrt{\mu(\bx)} }{ \sum_{g=1}^G \sqrt{\mu^{(g)}(\bx) /\varepsilon^{(g)}}} )
\end{eqnarray*}

Furthermore, we have 
\begin{eqnarray*}
   E_\bx[J(\bstheta_0;\bx) ]  &\leq& \log \sum_{g=1}^G E_\bx \left[ \tilde{w}^{(g)}(\bx) \exp(\lambda CE[M^{(g)}_t(\bx),\bfh(\bx;\theta_0) ]  )\right],
\end{eqnarray*}

For convenience, we denote  $\frac{1}{N} \sum_{i=1}^N \log \left[\sum_{g=1}^G w_i^{(g)} \frac{1}{B\left(\mathbf{1}_k+\lambda \boldp_i^{(g)}\right)} \prod_{k=1}^K\left(h_k\left(\bx_i; \bstheta\right)\right)^{\lambda p_{ik}^{(g)}}\right]$ by  $\hat{E}_\bx[J(\bstheta;\bx)|\cD]$. Let $T_1 = V(   \frac{ \sqrt{\mu(\bx)} }{ \sum_{g=1}^G \sqrt{\mu^{(g)}(\bx) /\varepsilon^{(g)}}} )$

Hence, we have
    \begin{eqnarray*}
        (1+\lambda)\cE^{\varphi}(\hat{\bstheta}_n,\bstheta_0)  &\leq&  \cE^{\varphi}(\hat{\bstheta}_n,\bstheta_0) + \lambda  E_\bx[J(\hat{\bstheta}_n;\bx)]+\lambda T_1 - \lambda \cE^{\varphi}(\bstheta_0)
        \\
&=& \cE^{\varphi}(\hat{\bstheta}_n,\bstheta_0) - \cE_n^{\varphi}(\hat{\bstheta}_n,\bstheta_0) + \cL^{Multi-KD}(\hat{\bstheta}_n)  - \cL(\bstheta_0|\cD) \\
&& + \lambda E_\bx[J(\bstheta;\bx)]-\lambda \hat{E}_\bx[J(\bstheta;\bx)|\cD] +\lambda T_1 - \lambda \cE^{\varphi}(\bstheta_0)  \\
&\leq&  \left|\cE^{\varphi}(\hat{\bstheta}_n,\bstheta_0) - \cE_n^{\varphi}(\hat{\bstheta}_n,\bstheta_0)\right|  + \cL^{Multi-KD}(\bstheta_0)  - \cL(\bstheta_0|\cD) \\
&& + \lambda \left|E_\bx[J(\bstheta_0;\bx)]- \hat{E}_\bx[J(\bstheta;\bx)|\cD]\right| +\lambda T_1  - \lambda \cE^{\varphi}(\bstheta_0)  \\
&\leq&  \left|\cE^{\varphi}(\hat{\bstheta}_n,\bstheta_0) - \cE_n^{\varphi}(\hat{\bstheta}_n,\bstheta_0)\right|  + \lambda \hat{E}_\bx[J(\bstheta_0;\bx)|\cD]  - \lambda E_\bx[J(\bstheta_0;\bx)]  \\
&& + \lambda \left|E_\bx[J(\bstheta_0;\bx)]- \hat{E}_\bx[J(\bstheta;\bx)|\cD]\right| +\lambda T_1 + \lambda E_\bx[J(\bstheta_0;\bx)] - \lambda \cE^{\varphi}(\bstheta_0) 
    \end{eqnarray*}

    
\end{proof}

\section{Simulation}\label{supp:simu}

\subsection{Details on Model Structure}

\noindent \textbf{Simulation 1:}
The teacher models employ a Multilayer Perceptron (MLP) architecture consisting of five hidden layers. These layers have 7, 10, 12, 10, and 5 nodes respectively. The model uses the ReLU activation function. Different dropout rates are applied for two teacher models.  
The student model employs an MLP architecture consisting of one hidden layer with 5 nodes. 

\noindent \textbf{Simulation 2:}
The teacher models employ the MLP architecture consisting of five hidden layers with 7, 15, 12, 10, and 7 nodes respectively. The model uses the ReLU activation function and incorporates a dropout rate of 0.2. 
The student model utilizes an MLP architecture with a single hidden layer comprising 10 nodes, activated by the ReLU function.

\begin{table}[h]
\caption{Model Structure. The number of parameters for each respective model is indicated in parentheses.}
    \centering
    \scalebox{0.8}{
    \begin{tabular}{c|ccc|c} 
    \hline
    & \multicolumn{3}{c|}{ Teacher models } & Student Model \\
     & $M_T^{(1)}$ & $M_T^{(2)}$ & $M_T^{(3)}$ &  $M_s$ \\ \hline 
    Simulation 1 & MLP-1 (430) & MLP-1 (430) & - & MLP-S1 (27) \\
    Simulation 2 & MLP-2 (600) & MLP-2 (600) & MLP-2 (600) & MLP-S2 (115)  \\
    Protein & ESM-2-650M (650M) & ProtT5-XL-half (120M) & - & ESM-2-8M (8M) \\
    Digit-5 & CNN-L (2M) & Resnet-50 (26M) & Resnet-50 (26M) & CNN-S (0.1M) \\
    \hline
    \end{tabular}}
    \label{tab:model}
    \vspace{-0.1cm}
\end{table}

\subsection{Implementation Details of BNN and Dropout}\label{supp:imple_BNN_Drop}

To implement the method integrating the original KD and BNN, we set the BNN model as the student. We train this model using variational inference with details described in \citet{shridhar2019comprehensive}, and incorporate the KD loss to distill knowledge from the teacher model. An important hyperparameter when training the BNN model is the weight of the CE loss. We perform a grid search over the values $\{0.01, 0.05, 0.1, 0.2\}$ to find the parameter that yields the highest accuracy.
For the CIFAR-10 and CIFAR-100 datasets, due to computational constraints, we add the Bayesian layers only after the feature extraction blocks.

To implement the method integrating the original KD and Monte Carlo dropout, we enable dropout during the testing/inference stage to obtain multiple stochastic predictions.

\section{Real Data Analysis}\label{supp:real}

\subsection{Details on Protein Subcellular Location Prediction}\label{supp:protein}

\textbf{ESM-2 model:} The ESM-2 model (Evolutionary Scale Modeling v2) \citep{lin2023evolutionary}, developed by Meta AI, is a state-of-the-art protein language model trained on large-scale protein sequence datasets, including UniRef50 \citep{suzek2015uniref}. Using a Transformer architecture, it captures intricate sequence patterns and relationships, excelling in tasks such as structure prediction, mutation effect prediction, and protein design. We use the ESM-2-650M variant, which contains 650 million parameters and offers improved performance and scalability over its predecessor, as teacher model 1. The ESM-2-8M variant, which contains 8 million parameters, is used as the student model.

\textbf{ProtT5 model:} The ProtT5 model is a protein language model based on the T5 architecture and trained on UniRef50. It captures both local and global sequence features, excelling in tasks such as structure prediction, function annotation, and sequence classification. We use the ProtT5-XL-half model as teacher model 2, which is a half-precision encoder-only variant optimized for efficient and accurate protein sequence embeddings.

During training, the feature extraction layers of these teacher models are frozen, and only the classifier layers are fine-tuned using our dataset.

\subsection{Details on Digit-5}\label{supp:digit5}

We focused on three colored variants of Digit-5: \textbf{SVHN}, \textbf{MNIST-M}, and \textbf{SynthDigits}. 
\textbf{SVHN} (Street View House Numbers) is a dataset of digit images obtained from real-world house number signs in Google Street View \citep{netzer2011reading}. It includes a training set of 73,257 examples and a test set of 26,032 examples. Each example is a 32 × 32 colored image associated with a label of 10 classes, representing the digits 0 through 9.
\textbf{MNIST-M} is a modified version of the original MNIST dataset, where grayscale digit images are superimposed on randomly chosen colored background patches \citep{peng2019moment}. The dataset retains the original MNIST structure, with a training set of 55,000 examples and a test set of 10,000 examples. Each example is a 28 × 28 colored image associated with a label of 10 classes.
\textbf{SynthDigits} is a synthetic dataset consisting of 25,000 training images and 10,000 test images, where each example is a 32 × 32 colored image of a digit from 0 to 9 \citep{ganin2015unsupervised}. The images are generated with diverse fonts, colors, and background patterns, providing a visually rich and challenging dataset for digit recognition tasks.

\noindent \textbf{Model Structure:}
The teacher model employs the 50-layer ResNet architecture as described by \citet{he2016deep}. 
The ResNet50 model has 25.6 million parameters.
The student model employs a CNN architecture with two convolutional layers and a subsequent fully connected layer. The first convolutional layer has 16 output channels, batch normalization, ReLU activation, and max pooling. The second convolutional layer follows a similar pattern but with 8 output channels. The convolutional layer is then followed by fully connected layers, consisting of a hidden layer with 60 nodes and an output layer with 10 nodes. The student model has approximately 0.1 million parameters, which is 256 times fewer than the teacher models.

\putbib[ref]
